\newtheorem{assumption}{Assumption}
\newtheorem{lemma}{\textbf{Lemma}}
\newtheorem{corollary}{\textbf{Corollary}}
\newtheorem{proposition}{\textbf{Proposition}}
\newtheorem{remark}{\textbf{Remark}}
\begin{document}
	\title{UAV-Enabled Asynchronous Federated Learning}
	\author{	Zhiyuan~Zhai, 
		Xiaojun~Yuan,~\IEEEmembership{Senior Member, IEEE}, 
		Xin~Wang,~\IEEEmembership{Fellow, IEEE}, and Huiyuan~Yang,~\IEEEmembership{Graduate Student Member, IEEE}
	}
	\pdfoutput=1
	\maketitle

	\begin{abstract}
	To exploit unprecedented   data generation in  mobile edge networks, federated learning (FL) has emerged as a promising alternative to the conventional centralized machine learning (ML). By collectively training  a unified learning model on edge devices, FL bypasses the need of direct data transmission, thereby addressing problems such as latency issues and privacy concerns inherent in centralized ML. 
	However, FL still faces  some critical challenges in deployment.
	One major challenge called straggler  issue severely limits FL's  coverage where the device with the weakest channel condition  becomes  the bottleneck of the model aggregation performance.
	Besides, the huge uplink communication overhead  compromises the effectiveness of FL, which is particularly pronounced in large-scale systems. 
	To address  the straggler issue, we propose the integration of an unmanned aerial vehicle (UAV) as the parameter server (UAV-PS) to coordinate the  FL implementation. 
	We further employ over-the-air computation technique that leverages the superposition property of wireless channels for efficient uplink communication. 
	Specifically, in this paper, we  develop a novel UAV-enabled over-the-air asynchronous FL (UAV-AFL) framework which supports the  UAV-PS in updating the model continuously  to enhance the learning performance. Moreover, we introduce  a `staleness upper bound' metric  to control the asynchronous level in AFL and conduct a convergence analysis  to quantitatively capture the impact of  model asynchrony, device selection and communication errors on the UAV-AFL learning performance. Based on this, a unified communication-learning  problem is formulated to maximize asymptotical learning performance by optimizing the UAV-PS trajectory, device selection and over-the-air transceiver design. Simulation results reveal valuable insights into the UAV-AFL  system  and demonstrate that the proposed UAV-AFL scheme achieves substantially learning efficiency improvement compared with the state-of-the-art approaches.
\end{abstract}

\begin{IEEEkeywords}
	Asynchronous federated learning, UAV communication, over-the-air computation, staleness, device selection, trajectory optimization.
\end{IEEEkeywords}

\section{Introduction}
Extensive data available on mobile edge devices have sparked considerable interest in the development of artificial intelligence (AI) services, such as image recognition  and natural language processing \cite{he2016deep}, \cite{young2018recent}. However, traditional machine learning (ML) requires the collection of data in a centralized server for model training, which incurs substantial energy and bandwidth costs, significant time delays, and potential privacy concerns \cite{zhu2020toward}.
Federated learning (FL), as a novel ML paradigm,  has emerged to tackle these problems \cite{konevcny2016federated}. In a typical FL framework, each edge device computes local model updates based on its individual dataset and subsequently transmits the model update to a parameter server (PS) for model aggregation. The global model is then  updated at the PS and shared with the participating devices. In this way, FL replaces direct data transmission with the model updates (gradients) transmission, which not only alleviates the communication burden but also safeguards the confidentiality of local data.

Despite the above advantages, the huge uplink communication overhead incurred in the iterative learning process remains a significant  bottleneck of FL \cite{konevcny2016federated}.   Over-the-air computation  \cite{zhu2019broadband,zhu2018mimo}
for FL uplink model aggregation  is a promising technique to reduce this  communication cost.  By using the superposition property of electromagnetic waves in wireless channels, edge devices can concurrently transmit their gradients utilizing the same time-frequency resources.
The authors in \cite{zhu2019broadband} thus  introduce  over-the-air model aggregation  for FL, and the result demonstrates  substantial communication efficiency improvement and latency reduction when compared to orthogonal multiple access (OMA) schemes. This approach  is  extended to multiple-input multiple-output (MIMO) scenarios in \cite{zhu2018mimo}
 where  over-the-air computation is applied.


In addition to the communication burden, FL also encounters a new design challenge, namely the straggler issue  \cite{9451567,zhong2022over}. 
Since the PS needs to aggregate gradients from all participating  devices, the devices with poor channel conditions (i.e., stragglers) always introduce huge communication errors into the aggregated results.
This is particularly true when FL is deployed over a wide range of distributed devices where the   errors from distant stragglers severely  impair the overall aggregation performance.
In this context, using an unmanned aerial vehicle (UAV) as PS (UAV-PS) is a promising approach to provide FL services. This is because the UAV has the ability to move closer to the stragglers and ensure reliable communication links with them.
The enhanced capabilities, such as superior communication quality and expanded  coverage can be attained by utilizing the high mobility and on-demand deployment ability of UAV \cite{wu2019fundamental}, \cite{hua2018power}.
For this reason, the authors in \cite{lim2021uav} propose to deploy a UAV  as a network relay, where the UAV  collects model updates from devices and then forward them to the model owner for improving the FL coverage. Moreover, in order to minimize a fairness metric regarding device computing time, a UAV-PS trajectory optimization is carried out in \cite{donevski2021federated}.

Till now, the UAV-assisted FL system using over-the-air computation technique  has only been investigated in \cite{zhong2022uav} and \cite{10283588}. In particular, \cite{zhong2022uav} introduces a hierarchical aggregation approach where UAV-PS collects gradients from devices at various locations and  merges them at the end of each iteration.  \cite{10283588} dispatches the UAV-PS to assist with the aggregation of different learning models and determines the transmission scheduling using  greedy deflation approach. However, these studies \cite{lim2021uav,donevski2021federated,zhong2022uav,10283588}, while successfully integrating UAVs into the FL system to enhance learning performance, have not addressed a pivotal issue: 
The synchronous structure of FL is inherently incompatible with the asynchronous nature of UAV-assisted communication systems. Specifically, in the FL system, the  model aggregation  is only finished when all the gradients of devices are collected. 
However, the UAV-PS can only   collect the gradients of the nearby devices (i.e., a subset of all devices). 
This procedural incompatibility leads to a significant delay in model update until the UAV-PS  navigates through the entire service area and collects gradients from all  devices,  which severely impairs the  overall efficiency of the learning process.  
A feasible solution  to mitigate this incompatibility is to allow the UAV-PS to execute one  FL iteration exclusively with the nearby devices. In this approach, the UAV-PS broadcasts the global model to these adjacent  devices, and the latter then use the model for gradient computation. Subsequently, the UAV-PS collects the computed gradients and updates the model parameters. After that, the UAV-PS moves to  another  area and begins new FL iteration with the nearby devices.
However, while this scheme reduces the latency of model updates, it requires the UAV-PS  to hover over the serviced devices when they conduct local gradient computation. This time-consuming local computation process\footnote{In the internet of things (IoT) networks, edge devices often require several seconds for completing a single gradient descent (GD) algorithm due to their limited CPU capacity\cite{8737464},\cite{li2018adaptive}, which is significantly greater than the gradient transmission time.}
 could also adversely impact the overall learning efficiency.

To fully unleash the system potential and maximize the learning efficiency, we develop a novel UAV-enabled over-the-air asynchronous FL (AFL) scheme. This scheme is generally obtained by further changing the order of
 the model aggregation and  broadcast steps in the framework of aforementioned solution.
Since the aggregation  happens before the broadcast step, some aggregated gradients  are  computed based on different versions of the global model, which leads to an AFL problem. 
Specifically, in the proposed UAV-enabled over-the-air AFL (UAV-AFL) scheme,  the UAV-PS selects some devices to communicate with,  collects these  devices' gradients over-the-air,   updates the  model  and then performs  model broadcasting to these selected devices. After that, these selected devices conduct local computation based on the newly received model while the UAV-PS continues the next round of AFL by selecting new devices without waiting for the  local computations of these previously selected devices. 
Compared with conventional UAV-assisted FL settings \cite{lim2021uav,donevski2021federated,zhong2022uav,10283588},
the UAV-AFL scheme improves learning efficiency by offering a compatible asynchronous communication-learning design which allows the UAV-PS to traverse the service area to collect gradients and update models  continuously.

In this paper, we investigate the potential of the UAV-AFL system by developing a unified analysis framework that captures the impact of model asynchrony, device selection and communication errors on the AFL training loss. Based on this, a joint communication-learning  optimization algorithm is formulated to maximize the asymptotic learning performance. Our main contributions  are summarized as follows.
\begin{itemize}
	\item  We study a UAV-assisted  FL system where a UAV-PS is deployed to perform  over-the-air aggregation over line-of-sight (LoS) channels.  We propose a novel UAV-AFL scheme to improve the learning efficiency. This scheme  has two  advantages: 1) The UAV-PS can immediately updates the global model upon the reception of  the aggregated gradients. 2) The UAV-PS can collect the newly computed gradients for model update when  devices are conducting  local computation.
	\item  
	We introduce a metric called `staleness upper bound', i.e., the maximal version discrepancy among the models maintained by all edge devices,  to characterize the level of asynchrony in the UAV-AFL system.  Then,  a rigorous  convergence analysis on the  learning loss is carried out  by taking the model asynchrony, device selection  and communication errors into account. To the best of our knowledge, this is the first attempt to analyze the convergence of asynchronous learning/optimization in the presence of general communication error and device selection loss.
	\item  We formulate a  communication-learning  co-design optimization problem to enhance the learning performance of the UAV-AFL based on the convergence analysis.
An effective algorithm is developed to solve the  problem. Specifically, we transform the original problem via geometric programming (GP), and  design a two-layer penalty-based algorithm using  successive convex
	approximation (SCA) principle to jointly optimize the UAV trajectory, device selection and transceiver design. 
\end{itemize}
Simulation results under different system settings provide an in-depth study of the trade-offs between over-the-air aggregation quality, model update rates and asynchronous level, which offers valuable insights for the   UAV-AFL system design.	The comparison results indicate that the proposed UAV-AFL scheme achieves a substantial improvement in learning efficiency compared to the existing state-of-the-art solutions.

\textit{Notations:}
We denote the real and complex number sets by $\mathbb{R}$ and $\mathbb{C}$, respectively.  We use  $(\cdot)^\mathrm{T}$ to denote the  transpose; $|\cS|$ to denote the cardinality of set $\cS$; $\circ$ to denote the hadamard product; $\E$ to denote the expectation operator; $\norm{\cdot}$ to denote the $l_2$-norm; $\mathcal{CN}(\mu,\sigma^2)$ to denote circularly-symmetric complex normal distribution with mean $\mu$ and covariance $\sigma^2$.  
	\section{System Model}
	
	\subsection{Federated Learning System}
	We consider a  FL system with $M$ devices collaboratively training a common global  model, which can be characterized by a $D$-dimensional model parameter $\xx \in \mathbb{R}^{D}$.  The entire training data set $\mathcal{S}$, with size $\left|\mathcal{S}\right|$, is distributed over the edge devices. We use $\cM=\left\{1,\cdots,M\right\}$ to denote the set of  devices involved in the system. Suppose that each device possesses a local training data set $\mathcal{S}_m$ with  size $\left|\mathcal{S}_m\right|=S_d$, where $\sum_{m\in\cM}\left|\mathcal{S}_m\right|=MS_d=\left|\mathcal{S}\right|$ and $\mathcal{S}_i\cap\mathcal{S}_j=\varnothing,\forall i \neq j$. The learning objective of the FL system is to find an optimal parameter $\xx^{\star}$ minimizing the empirical loss function given by
	\begin{align}\label{overall_obj_fl}
		F(\xx)=\frac{1}{\left|\mathcal{S}\right|}\sum_{\xi_i\in \cS}f(\xx,\xi_i),
	\end{align}
	where $\xi_i$ is the $i$-th training sample including the input feature and output label, and $f(\xx,\xi_i)$ is the  loss function with respect to model $\xx$ based on sample $\xi_i$.  Denote the local loss function of device $m$ as 
	\begin{align}
		f_m(\xx)=\frac{1}{\left|\mathcal{S}_m\right|}\sum_{\xi_i \in \cS_m}f(\xx,\xi_i).
	\end{align}
	Then, the 	FL task in \eqref{overall_obj_fl} can be represented as 
	\begin{align}
		\min_{\xx \in \mathbb{R}^d} ~F(\xx)=\frac{1}{M}\sum_{m\in \cM}f_m(\xx).
	\end{align}

	\subsection{UAV-Assisted Communication System}
	A UAV-assisted communication network is used to fulfill the FL task, where a UAV-operated parameter server (i.e., UAV-PS) is deployed to coordinate the edge devices performing  collaborative training. We employ a three-dimensional Cartesian coordinates system, where the coordinates of the $m$-th ground device   are $\ww_m=\left[x_m,y_m,0\right], \forall m$. We assume that the UAV-PS maintains a constant flight altitude $H$ and the position of the UAV-PS  remains constant  during one time slot with duration  $\delta_t$. The coordinates of the UAV-PS at time slot $k$, denoted by $\qq(k)$, can be expressed as $\qq(k)=\left[x(k),y(k),H\right]$. We then have the following mechanical constraints
	\begin{align}
		&\vv(k)=\frac{\qq(k+1)-\qq(k)}{\delta_t},\norm{\vv(k)}\leq v_{\max}, \forall k, \label{mech_1}\\
		&\aa_k=\frac{\vv(k+1)-\vv(k)}{\delta_t},\norm{\aa_k}\leq a_{\max},\forall k,\label{mech_2}\\
		&\qq(0)=\qq(K)=\qq_F,\label{mech_3}
	\end{align}
	where $v_{\max}$ (or $a_{\max}$) denotes the maximal speed (or acceleration) of the UAV-PS, $K$ is the total number of time slots, and $\qq_F$ denotes the initial and final dispatch position of the UAV-PS.
	
	We assume  that the channel state information (CSI) is available at edge devices and the UAV-PS \cite{cao2020cooperative}, and  the Doppler effect arising from the movement of the UAV-PS can be effectively compensated at the receiver. Moreover, due to the ground-to-air nature of UAV communication, we focus on the line-of-sight (LoS) channels for the connection links. 
	Therefore, the channels connecting the devices to the UAV-PS adhere to the free-space path loss model, expressed as
	\begin{align}\label{channel}
		h_m(k)=\sqrt{g_0d_m^{-2}(k)}\vartheta_m(k),m \in {\cN}_k, \forall k 
	\end{align}
	where $h_m(k)$ is the channel between the $m$-th device and the UAV-PS in the $k$-th time slot, $g_0$ represents the channel power gain at the reference distance $d_\text{ref}=1 \text{m}$, $d_m(k)=\norm{\qq(k)-\ww_m}$ denotes the distance between device $m$ and the UAV-PS at the $k$-th time slot,  $\vartheta_m(k)=e^{j\theta_m(k)}$ with $\theta_m(k)$ being the phase shift of $h_m(k)$, and ${\cN}_k$ denotes  the set of devices  that have  LoS communication links with the UAV-PS during time slot $k$.

	\section{UAV-Enabled Over-the-Air AFL Framework}\label{sec33}
	\subsection{Motivation}
	In the conventional UAV-assisted FL literature \cite{lim2021uav,donevski2021federated,zhong2022uav,10283588}, the system conducts the following steps iteratively: 1) The UAV-PS broadcasts the global model to all  devices. 2) All devices conduct local gradient computation based on the received model.  
	3) The UAV-PS  navigates through the entire service area and collects gradients from all  devices.
	4) The UAV-PS finally aggregates the gradients and use it to update the global model.
	Here, the UAV-PS can only update the global model after all devices have completed local computations and the UAV-PS  has finalized the gradient collection of all devices (i.e., after step 3 and 4). Hence, this approach needs  excessively long time to carry out a single model update, which significantly impairs learning efficiency and effectiveness.
	
	To  address this issue, a straightforward choice  is to allow each FL iteration to involve only the devices located near the UAV-PS. In this method, the UAV-PS   traverses the entire service area and  provides FL service sequentially to the nearby devices.  When the UAV-PS approaches some devices, the system conducts the following steps:
	1) The UAV-PS  broadcasts the global model to the near devices. 2) These devices compute the gradients locally. 3) The UAV-PS collects the newly computed gradients from these devices. 4) The UAV-PS aggregates the gradients and updates the global model.
	In this way, the UAV-PS  can directly utilize the gradients from  near  devices to update the global model, thereby enhancing the learning efficiency. However, to collect the gradients (step 4), the UAV-PS in this approach cannot do anything but  hover near the devices to wait for them completing local computations, which  also compromises the system efficiency.
	
	Therefore, to improve the overall learning efficiency, it is crucial to develop a strategy where the UAV-PS can operate without interruption. This strategy should enable the UAV-PS to collect gradients and update models continuously for ensuring a seamless and efficient learning process.
	\subsection{Proposed UAV-Enabled AFL System}\label{sec3subA}
	\begin{figure}[h]
		\centering
		\includegraphics[width=0.3\textwidth]{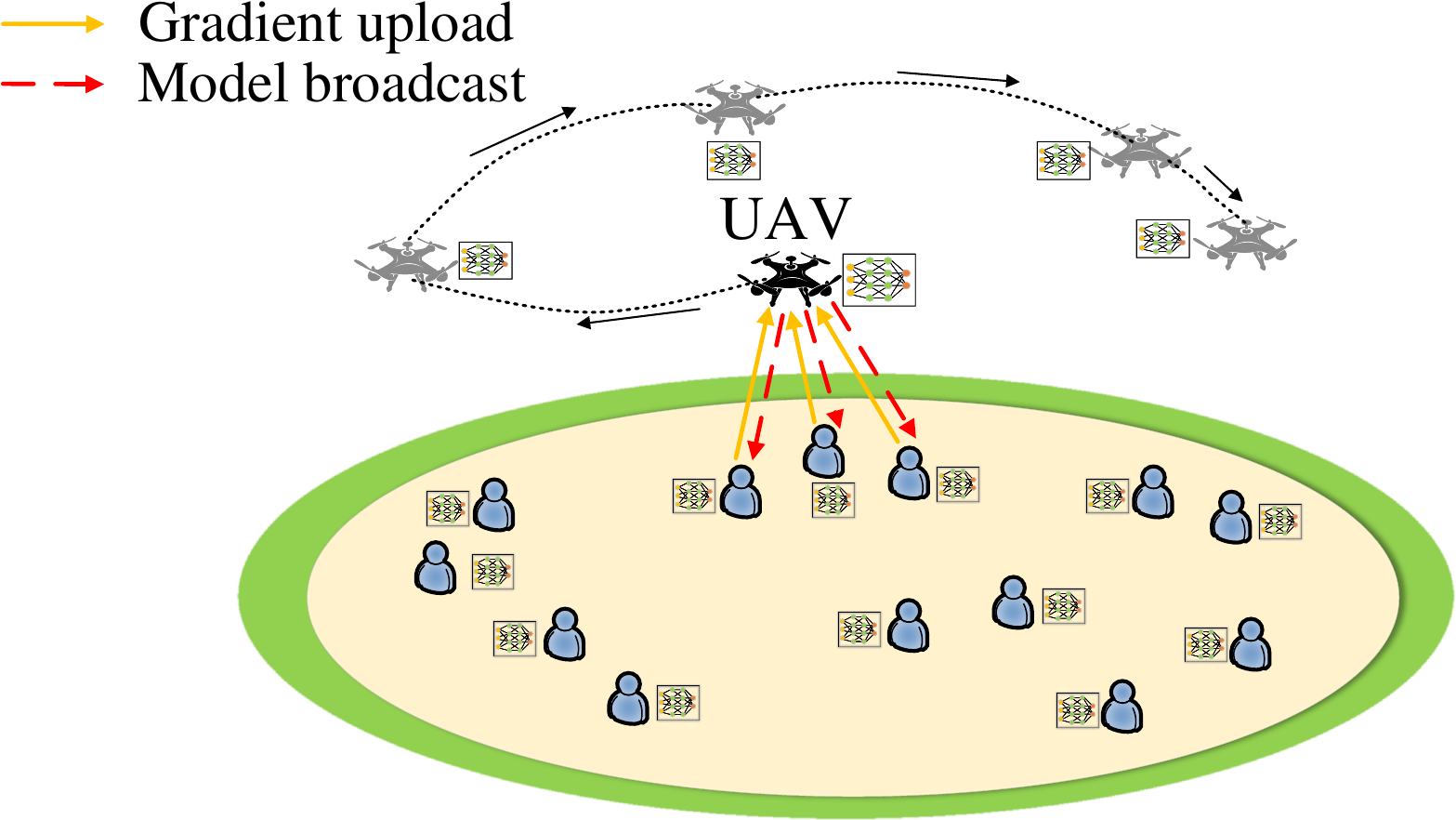}
		\caption{UAV-enabled AFL system model.}
		\label{fig_system_model}
	\end{figure}
	We consider a UAV-enabled  AFL  system as depicted in Fig.~\ref{fig_system_model}. 
In this system, at each time slot,  the UAV-PS selects a subset of devices to communicate with. Specifically, the UAV-PS collects the gradients of the selected devices, updates the global model, and then broadcast the updated model to these devices.  Subsequently, the selected devices initiate computation based on the received model, while the UAV-PS continues to  collect gradients from other devices for updating the model at the same time. 
	Denote by $\cJ_k \subset \cM $  the set of aggregation-ready devices in the $k$-th time slot, i.e., the devices that complete their local computation and are ready for aggregation at time slot $k$.
	In the $k$-th time slot, the system conducts the following procedures:
	
	\begin{itemize}
		\item \textit{Device selection}: 
		The UAV-PS selects a subset of devices in $\cJ_k \cap  \cN_k$ to participate in the aggregation process. Let  $\cM_k$ be the set of selected devices in the $k$-th time slot. We define $a_m(k)$ as an indicator function of the  selection status of the \(m\)-th device in the \(k\)-th time slot. That is, $a_m(k)$ = 1 if $m \in \cM_k$, and $a_m(k)$ = 0 otherwise.
		\item \textit{Gradient aggregation}: 
		The  devices in  $\cM_k$ transmit the  computed  gradients $\gg_{m}(k), m \in \cM_k$ to the UAV-PS via the wireless channels.
		Here, $\gg_{m}(k)$ is represented as 
		\begin{align}\label{gra_diver}
			\gg_{m}(k)=\nabla f_m(\xx(k-\tau_{m,k})), m \in \cM_k,
		\end{align}
		where  $\xx(k-\tau_{m,k})$ denotes  the global model of time slot $k-\tau_{m,k}$,  $\nabla f_m(\xx(k-\tau_{m,k}))$ is the gradient of device $m$ with respect to the model $\xx=\xx({k-\tau_{m,k})}$, and  $~\tau_{m,k}=k-\max_{k'<k}\left\{k'\mid m \in \cM_{k'}\right\}$ is referred to as \emph{staleness} \cite{lian2015asynchronous}  of device $m$ in time slot $k$.
		Guided by the received signals,
		the UAV-PS obtains an estimate  $\hat{\gg}(k)$ of the desired aggregated gradient $\gg(k)$. The desired gradient $\gg(k)$ is given by
		\begin{align}\label{de_agg}
			\gg(k)=\frac{1}{\left|\cM_k\right|}\sum_{m\in \cM_k}\gg_m(k).
		\end{align} 
		However, due to the impact of channel fading and communication noise, we in general have $\hat{\gg}(k)\neq\gg(k) $. With $\hat{\gg}(k)$,   the global model $\xx(k)$ is then updated by 
		\begin{align}\label{update}
			\xx(k+1)=\xx(k)-\lambda\hat{\gg}(k),
		\end{align}
		where $\lambda$ is the learning rate.
		\item \textit{Model broadcast}: 
		The UAV-PS broadcasts the updated global model $\xx(k+1)$ to the  devices $m\in \cM_k$. 
		\item \textit{Local computation}: 
		The devices $m\in \cM_k$  compute their individual gradients using $\xx(k+1)$. Let $c_m$ represent the computation time of the $m$-th device. Hence, the devices $m\in \cM_k$ cannot communicate with the UAV-PS from time slot $k+1$ to $k+c_m$ inclusive.
	\end{itemize}
	\begin{remark}\label{remark1}
		In the synchronous FL setting \cite{chen2020joint,amiri2020federated,yang2020energy}, all the gradients $\gg$'s are guaranteed to be computed using the unified global model $\xx(k)$. However, in the UAV-enabled  AFL  system, some aggregated gradients $\gg_{m}(k)$  might be computed based on the stale global model, i.e., the  model $\xx(k')$, where $k'<k$.
	\end{remark}

		\begin{remark}	
		Compared to the approaches in\cite{lim2021uav,donevski2021federated,zhong2022uav,10283588}, the UAV-AFL scheme  provides  a more efficient learning strategy. Here, the UAV-PS can seamlessly collect the newly computed gradients from edge devices and update the model parameters immediately.
		 This feature   not only accelerates the UAV-PS  update rate but also increases the device computing efficiency.
		\end{remark}

	\subsection{Over-the-Air Model Aggregation}
	We employ the over-the-air computation technique \cite{zhu2019broadband} to achieve efficient   aggregation in the UAV-AFL system. Specifically, in the $k$-th time slot, the selected devices  $m\in \cM_k$ collaboratively transmit their gradients $\{\gg_{m}(k) : m \in \cM_k\} $  using the same time-frequency resources. By  manipulating the transmit and receive equalization coefficients, the estimated weighted sum $\hat{\gg}(k)$ in \eqref{update} can be coherently superposed at the UAV-PS. The specifics are elaborated as follows.
	
	To perform over-the-air aggregation, the initial step is the normalization of $\gg_{m}(k),m \in \cM_k$, which can be expressed as
	\begin{align}\label{normalization}
		\tilde{\gg}_{m}(k)=\left(\gg_{m}(k)-\bar{g}_{m}(k)\1_{D\times1}\right)/\sqrt{v_{m}(k)},
	\end{align}
	where $\tilde{\gg}_{m}(k) \in \mathbb{R}^D$ is the normalized  zero-mean and unit-variance version of $\gg_{m}(k)$,  and $\bar{g}_{m}(k)=\frac{1}{D}\sum_{d=1}^{D}{g}_{m}^{(k)}(d)$ as well as $v_{m}(k)=\frac{1}{D}\sum_{d=1}^{D}\left({g}_{m}^{(k)}(d)-\bar{g}_{m}(k)\right)^2$ are the mean and variance of $\gg_{m}(k)$, respectively. By following the common practice as in \cite{9451567} and \cite{lin2021deploying}, we assume $\left\{\bar{g}_{m}(k),v_{m}(k)| m\in \cM_k\right\}$ to be transmitted to the UAV-PS via error-free links. We then modulate (analog modulate)  $\tilde{\gg}_{m}(k)$  as a complex version $\rr_{m}(k)\in \mathbb{C}^C$ for transmission, i.e.,
	\begin{align}\label{module}
		\rr_{m}(k)=\tilde{\gg}_{m}(k)\Big(1:\frac{D}{2}\Big)+j\tilde{\gg}_{m}(k)\Big(\frac{D+2}{2}:D\Big)
	\end{align}
	where we assume  $D$ is even  and $C=\frac{D}{2}$. Let $\beta_{m}(k)$ denote the pre-processing factor of device $m$ at time slot $k$. Then the transmit signal $\mathbf{s}_{m}(k)$ can be expressed as
	\begin{align}\label{smk}
		\mathbf{s}_{m}(k)=\beta_{m}(k)\rr_{m}(k)\in \mathbb{C}^C,
	\end{align}
	and the corresponding power constraint is $\E{\norm{s_{m}^{(k)}[l]}^2}=2|\beta_{m}(k)|^2\leq P_0$, where $s_{m}^{(k)}[l]$ is the transmit signal of device $m$ at the $l$-th channel use. To compensate  the phase shift of channel $h_{m}(k)$, we  construct  $\beta_{m}(k)$ as $\beta_{m}(k)={b_{m}(k)}e^{-j\theta_m(k)}$, where $\theta_m(k)$ is the phase of  $h_{m}(k)$ and $b_m(k)\in \mathbb{R}$ is amplitude of $\beta_{m}(k)$.
	By such design, the received signal at the UAV-PS, denoted by $\yy(k)$,  is given by
	\begin{align}\label{rece1}
		\yy(k)=\sum_{m\in\cM_k}|h_m(k)|{b_m(k)}\rr_{m}(k)+\nn(k),
	\end{align}
	where $\nn(k) \in \mathbb{C}^C$ is an additive white Gaussian noise (AWGN) vector  with the elements following the distribution  $\mathcal{C}\mathcal{N}\left(0,\delta_n^2\right)$. Subsequently, the UAV-PS employs  a denoising factor  $\zeta(k)\in \mathbb{R}$ to process the received signal $\yy(k)$ as
	\begin{align}\label{denoising}
		\hat{\rr}(k)\!=\!\zeta(k)\yy(k)=\!\zeta(k)\!\bigg(\!\sum_{m\in\cM_k}\!\!|h_m(k)|{b_m(k)}\rr_{m}(k)\!+\!\nn(k)\!\bigg)\!,
	\end{align}
	and then the estimate $\hat{\gg}(k)$  is finally reconstructed as
	\begin{align}\label{demodule}
		\hat{\gg}(k)=\left[\text{Re}\left\{\hat{\rr}(k)\right\}^{\T},\text{Im}\left\{\hat{\rr}(k)\right\}^{\T}\right]^{\T}+\bar{g}(k)\1_{D\times1} 
	\end{align}
	where $\bar{g}(k)=\frac{1}{|\cM_k|}\sum_{m\in \cM_k}\bar{g}_m{(k)}$ is introduced to compensate the subtracted mean of gradients $\bar{g}_m(k), m \in \cM_k$ in the normalization step \eqref{normalization}.
	
	\subsection{Overall UAV-AFL Framework}
	We summarize the UAV-enabled over-the-air AFL (UAV-AFL) framework discussed above in Algorithm \ref{alg_FL_framework}. We define the  device selection, UAV-PS trajectory, device transmit amplitude gain  and the UAV-PS denoising factor over $K$ time slots as  $\mA = \{ a_{m}(k) \mid\forall m,k \}\in \mathbb{R}^{M\times K}$, $\mQ =\{\qq(k)\mid \forall k \} \in \mathbb{R}^{3\times K}$, $\mB= \{ b_m(k) \mid \forall m,k \} \in \mathbb{R}^{M\times K}$, and $\mZ=\{\zeta(k)\mid \forall k\}\in \mathbb{R}^{K}$, respectively.
	
	\begin{algorithm}
	\caption{UAV-AFL Framework} 
	 	\label{alg_FL_framework} 
	 	\begin{algorithmic}[hp] 
		 		\STATE {\textbf{Initialization:}} The UAV-PS  distributes the initial model $\{\mathbf{x}(1)\}$ to each device and  optimizes $(\mA, \mQ, \mB,\mZ)$ based on the estimated CSI and  device state information.
		 		\FOR{ $k \in [K]$ }
		 		\STATE The  devices in $\cM_k$ transmit  $\{\bar{g}_{m}(k),v_{m}(k)\}_{ m\in \cM_k}$ to UAV-PS through error-free links.
		 		\STATE The devices in $\cM_k$ transmit  $\{\gg_m(k)\}_{ m\in \cM_k}$ to the UAV-PS  based on \eqref{normalization}-\eqref{rece1}.
		 		\STATE The UAV-PS recovers the aggregated gradient $\hat{\gg}(k)$ based on \eqref{denoising}-\eqref{demodule}.
		 		\STATE The UAV-PS updates $\xx(k)$ based on \eqref{update} and broadcasts it to the devices in $\cM_k$.
		 		\STATE  The devices in $\cM_k$ carry out computations based on \eqref{gra_diver} and then wait for the next round of selection.
		 		\ENDFOR 
		 	\end{algorithmic}
		\end{algorithm}

	
	\section{Convergence Analysis}\label{convergence}
	
	In this section, we analyze the convergence performance of the proposed UAV-AFL system.  We present several  assumptions in Section \ref{convergence}-A and derive  a tractable convergence bound by quantitatively characterizing the impact of $\{\mA, \mQ, \mB,\mZ\}$ on the learning performance  in Section \ref{convergence}-B.
	\subsection{Assumptions}

	To conduct the convergence analysis, we make the following assumptions.
	
	\begin{assumption}\label{as1}
		\rm{(Lipschitzian gradient)}. The function $F(\cdot)$ is differentiable and the corresponding gradients $\nabla F(\cdot)$ is Lipschitz continuous with parameter $L$, i.e.,
		\begin{align}
			\norm{\nabla F(\xx)-\nabla F(\yy)}\leq L \norm{\xx-\yy}, \forall \xx,\yy.
		\end{align}
	\end{assumption}
	\begin{assumption}\label{as2}
		\rm{(Strongly convex)}. The function $F(\cdot)$ is strongly convex with parameter $\mu$, i.e.,
		\begin{align}
			F(\xx)\geq F(\yy)\!+\!(\xx-\yy)^T\nabla F(\yy)\!+\!\frac{\mu}{2}\norm{\xx-\yy}^2, \forall \xx,\yy.
		\end{align}
	\end{assumption}
	\begin{assumption}\label{as3}
		\rm{(Staleness upper bound)}. All staleness   $\tau_{m,k},\forall m,k$ are bounded with the value  $S$, i.e., $\tau_{m,k}\leq S, \forall m, k $.
	\end{assumption}
	\begin{assumption}\label{as4}
		\rm{(Bounded gradient)}. The gradient  $\nabla F(\xx_{k-a_{m,k}})$ with the staleness  $a_{m,k}\leq 2S$ is bounded by $\nabla F(\xx_{k})$ with parameters $\alpha_1$ and $\alpha_2$, i.e.,
		\begin{align}\label{asss4}
			&\norm{\nabla F(\xx({k-a_{m,k}}))}^2\notag\\&\quad \leq \alpha_1+\alpha_2\norm{\nabla F(\xx(k))}^2,\forall a_{m,k}\leq 2S, m ,k.
		\end{align}
	\end{assumption}
	\begin{assumption}\label{as5}
		\rm{(Device variance)}. The variance of local gradient and the  global counterpart  is  bounded  with parameters $\delta$, i.e.,
		\begin{align}
			\norm{\nabla f_m(\xx)\!-\!\nabla F(\xx)}^2\leq \delta^2,\forall m, k.
		\end{align}
	\end{assumption}
	
	Assumptions \ref{as1}, \ref{as2} and \ref{as5} are   widely used in the stochastic optimization literature; see e.g., \cite{chen2020joint}, \cite{friedlander2012hybrid}, \cite{bertsekas1995neuro}.
	Assumption \ref{as1} enables us to choose an appropriate learning rate  for balancing convergence speed and learning stability. Assumption \ref{as2} ensures the existence of a unique global optimum $\xx^{\star}$ for the loss function $F(\cdot)$. Assumption \ref{as5}  limits the discrepancy between the gradients of the local and  global loss functions. 
	
	Assumptions \ref{as3} and \ref{as4} are introduced because of the asynchronous nature of UAV-AFL. Assumption \ref{as3} is commonly employed in the analysis of asynchronous algorithms, e.g., \cite{6877255}, \cite{liu2014asynchronous}. 
	Moreover, since the model version gap (staleness) is bounded, we can always find suitable values for $\alpha_1$ and $\alpha_2$ ensuring \eqref{asss4} holds.

%
	\subsection{Convergence  Analysis}
	By comparing with the ideal model update case, we rewrite \eqref{update} as
	\begin{align}\label{update_error}
		\xx(k+1)\!=\!\xx(k)\!-\!\lambda\hat{\gg}(k)\!=\!\xx(k)\!-\!\!\lambda\!\left(\nabla F(\xx(k))-\ee(k) \right)
	\end{align} 
	where $\nabla F(\xx(k))=\frac{1}{\left|\mathcal{S}\right|}\sum_{\xi_i\in \cS}\nabla f(\xx(k),\xi_i)$ is the gradient of $F(\xx)$ at $\xx=\xx(k)$, and $\ee(k)$ represents the overall error introduced  during the gradient evaluation process. 
	Under the UAV-AFL framework,  $\ee(k)$  originates from the following three aspects:
	\begin{itemize}
		\item \textit{Communication error $\ee_{c}(k)$}: The inherent communication noise   introduces  deviations into aggregated gradient $\hat{\gg}(k)$, consequently undermining the learning accuracy.
		\item \textit{Device selection error $\ee_{d}(k)$}: Given the common case that $\cM_k\neq\cM$, the aggregated gradients are computed using a subset of data set $\cS$, which can potentially result in training imbalance and overfitting \cite{hu2023scheduling}. 
		\item \textit{Model Asynchrony error $\ee_{a}(k)$}: As shown in \eqref{gra_diver}, the aggregated gradients are evaluated based on different model parameters, so the model  inconsistency exists within each aggregation process. 
	\end{itemize}
	We  mathematically formulate the above analysis of $\ee(k)$   as
	\begin{align}
		&\ee(k)=\underbrace{\nabla F(\xx(k))-\frac{1}{M}\sum_{m\in\cM}\nabla f_m(\xx({k-\tau_{m,k}}))}_{ \ee_{a}(k)}\notag\\&\!+\!\!\underbrace{\frac{1}{M}\!\!\!\sum_{m\in \cM}\!\!\!\nabla f_m(\xx({k-\tau_{m,k}}))\!-\!\frac{1}{|\cM_k|}\!\!\sum_{m\in \cM_k}\!\!\!\!\nabla f_m(\xx({k-\tau_{m,k}}))}_{ \ee_{d}(k)}\notag 
		\notag\\&+\underbrace{\frac{1}{|\cM_k|}\sum_{m\in \cM_k}\nabla f_m(\xx({k-\tau_{m,k}}))-\hat{\gg}(k)}_{ \ee_{c}(k)}\label{error_term}
	\end{align}
	
	From \eqref{error_term}, we see that errors $\ee_c(k)$, $\ee_d(k)$ and $\ee_a(k)$ collectively compromise the  model update process. The overall error $\ee(k)=0$ when the conditions $\tau_{m,k}=0$, $\cM_k=\cM$, and $\gg(k)=\hat{\gg}(k)$ hold simultaneously.
	We are now ready to  analyze the impacts of the errors $\{\ee_c(k),\ee_d(k),\ee_a(k)\},\forall k$  on  the learning performance.
	
	\begin{proposition}\label{pro1}
		Under Assumptions \ref{as1}-\ref{as5}, the communication MSE is given by
	\eqref{com_error}, the device selection MSE is bounded by \eqref{de_sel_error}, and the model asynchrony MSE is bounded by \eqref{asy_error}, i.e., 
		\begin{align}
			\E\norm{\ee_c(k)}^2= & \!\sum_{c=1}^{C}\E \bigg[\bigg|\!\sum_{m\in \cM_k}\!\!\! \Delta_m(k) r_m^{(k)}(c) - \zeta(k)n^{(k)}(c)\bigg|^2\bigg],\label{com_error}\\
	\E\norm{\ee_d(k)}^2
	\leq& 8\Theta(k)\big[{(M-|\cM_k|)}/{M}\big]^2,\label{de_sel_error}\\
		\E\norm{\ee_a(k)}^2\leq& 6S^2\Theta(k) + 6T\E\sum_{j=k-S}^{k-1}\norm{\ee_c(j)}^2 + 2\delta^2,\label{asy_error}
\end{align}
where $	\Delta_m(k) = \frac{\sqrt{v_m{(k)}}}{|\cM_k|} - \zeta(k)|h_m(k)|b_m(k)$, $	\Theta(k)= \delta^2 + \alpha_1 + \alpha_2\E\left[\norm{\nabla F(\xx(k))}^2\right]$.
	\end{proposition}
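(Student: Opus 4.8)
The plan is to treat the three error terms separately, since by construction \eqref{error_term} decomposes $\ee(k)$ into $\ee_a(k)+\ee_d(k)+\ee_c(k)$.

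First, for the communication MSE \eqref{com_error}, I would substitute the reconstruction chain \eqref{normalization}--\eqref{demodule} into the definition of $\ee_c(k)$. Writing $\gg(k) = \frac{1}{|\cM_k|}\sum_{m\in\cM_k}\nabla f_m(\xx(k-\tau_{m,k}))$ and expanding $\hat{\gg}(k)$ via \eqref{denoising}--\eqref{demodule}, the mean-compensation term $\bar g(k)\1$ cancels against $\frac{1}{|\cM_k|}\sum_m \bar g_m(k)\1$, leaving per-entry differences of the form $\big(\frac{\sqrt{v_m(k)}}{|\cM_k|} - \zeta(k)|h_m(k)|b_m(k)\big)\tilde{g}_m^{(k)}(d) - \zeta(k)n^{(k)}(\cdot)$. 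Folding the real/imaginary halves back into the complex representation $\rr_m(k)$ and collecting the coefficient as $\Delta_m(k)$ gives exactly \eqref{com_error}; this step is essentially bookkeeping once the modulation map is unwound, and the expectation is over the noise $\nn(k)$.

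Second, for the device selection MSE \eqref{de_sel_error}, I would note that $\ee_d(k) = \frac{1}{M}\sum_{m\in\cM}\nabla f_m(\xx(k-\tau_{m,k})) - \frac{1}{|\cM_k|}\sum_{m\in\cM_k}\nabla f_m(\xx(k-\tau_{m,k}))$ can be rewritten, by a standard algebraic identity for averaging over a subset versus the full set, as a weighted combination of the ``missing'' gradients $\{\nabla f_m(\xx(k-\tau_{m,k})) : m\notin\cM_k\}$ against the full average, with combined weight proportional to $(M-|\cM_k|)/M$. Then I would bound each gradient norm: $\norm{\nabla f_m(\xx(k-\tau_{m,k}))}^2 \le 2\norm{\nabla f_m(\xx(k-\tau_{m,k})) - \nabla F(\xx(k-\tau_{m,k}))}^2 + 2\norm{\nabla F(\xx(k-\tau_{m,k}))}^2 \le 2\delta^2 + 2(\alpha_1 + \alpha_2\norm{\nabla F(\xx(k))}^2)$ using Assumptions \ref{as5} and \ref{as4} (the staleness $\tau_{m,k}\le S \le 2S$ so \eqref{asss4} applies). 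Combining with Jensen/Cauchy--Schwarz to handle the sum-of-norms, and tracking the numerical constant, yields the factor $8\Theta(k)$ times $[(M-|\cM_k|)/M]^2$.

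Third, for the model asynchrony MSE \eqref{asy_error}, the key is that $\xx(k-\tau_{m,k})$ differs from $\xx(k)$ only through at most $S$ update steps \eqref{update_error}, so $\xx(k) - \xx(k-\tau_{m,k}) = -\lambda\sum_{j=k-\tau_{m,k}}^{k-1}(\nabla F(\xx(j)) - \ee(j))$. I would write $\ee_a(k) = \frac{1}{M}\sum_m(\nabla f_m(\xx(k)) - \nabla f_m(\xx(k-\tau_{m,k})))$, use $\nabla F = \frac1M\sum_m \nabla f_m$, then bound it by $2\norm{\nabla F(\xx(k)) - \frac1M\sum_m\nabla F(\xx(k-\tau_{m,k}))}^2$ plus a device-variance term of order $\delta^2$. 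Applying Assumption \ref{as1} (Lipschitz) gives $\norm{\nabla F(\xx(k)) - \nabla F(\xx(k-\tau_{m,k}))}^2 \le L^2\norm{\xx(k) - \xx(k-\tau_{m,k})}^2$; then I expand the telescoped sum, split via Cauchy--Schwarz over the at-most-$S$ terms into a ``gradient'' part (controlled by Assumption \ref{as4}, contributing $S^2\Theta(k)$-type terms) and an ``error'' part. The error part recursively involves $\ee_c(j)$, $\ee_d(j)$, $\ee_a(j)$ for $j\in\{k-S,\dots,k-1\}$; the device-selection and asynchrony sub-errors at those earlier slots would need to be re-absorbed, and the statement packages the net effect into the single term $6T\E\sum_{j=k-S}^{k-1}\norm{\ee_c(j)}^2$ (with $T$ a constant depending on $L$, $\lambda$, $S$, $\alpha_2$) plus $6S^2\Theta(k) + 2\delta^2$.

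The main obstacle is the recursion in the asynchrony bound: because $\ee(j)$ itself contains $\ee_a(j)$, a naive expansion does not close. I expect one must either invoke Assumption \ref{as4} to convert the stale gradient norms directly into $\Theta(k)$ (absorbing the chained dependence into $\alpha_1,\alpha_2$), or set up a joint inequality over the window $[k-S,k]$ and choose $\lambda$ small enough (relative to $L$ and $S$) that the coefficient multiplying the lagged $\ee_a$ terms is strictly less than one, allowing them to be folded back. Getting the bookkeeping of constants to land exactly on $6S^2$, $6T$, and $2$ is the delicate part; the rest is routine application of $\norm{a+b}^2 \le 2\norm{a}^2 + 2\norm{b}^2$, Jensen's inequality, and the stated assumptions.
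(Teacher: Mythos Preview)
Your treatment of \eqref{com_error} and \eqref{de_sel_error} is essentially what the paper does: substitute the modulation chain for the communication term, and for the selection term split off the device-variance part via Assumption~\ref{as5}, then control the remaining global-gradient norms with Assumption~\ref{as4}. That part is fine.

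The gap is in your handling of \eqref{asy_error}. You expand the update as $\xx(j+1)-\xx(j) = -\lambda(\nabla F(\xx(j)) - \ee(j))$, which forces the full error $\ee(j)=\ee_a(j)+\ee_d(j)+\ee_c(j)$ into the bound and creates the recursion you flag as the ``main obstacle.'' The paper sidesteps this entirely by \emph{not} routing through $\ee(j)$. Instead it writes the increment directly as $\xx(j+1)-\xx(j)=-\lambda\hat{\gg}(j)$ and uses the identity
\[
\hat{\gg}(j)=\frac{1}{|\cM_j|}\sum_{m\in\cM_j}\nabla f_m\big(\xx(j-\tau_{m,j})\big)-\ee_c(j),
\]
so only the communication error $\ee_c(j)$ appears explicitly. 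The remaining sum of stale local gradients is then split (via Assumption~\ref{as5}) into a device-variance piece of size $\delta^2$ and the global gradients $\nabla F(\xx(j-\tau_{m,j}))$, which are controlled by Assumption~\ref{as4}. This is precisely why that assumption is stated for staleness up to $2S$ rather than $S$: here $j\in\{k-S,\dots,k-1\}$ and $\tau_{m,j}\le S$, so the gap between $j-\tau_{m,j}$ and $k$ can be as large as $2S$. With this decomposition there is no $\ee_a(j)$ or $\ee_d(j)$ to re-absorb, no need for a small-$\lambda$ contraction argument, and the constants $6S^2$, $6S$ (the $T$ in the statement is a typo for $S$), and $2$ fall out after one application of $\norm{a+b+c}^2\le 3(\norm{a}^2+\norm{b}^2+\norm{c}^2)$ and one Cauchy--Schwarz over the at-most-$S$ summands.

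In short: your speculative fixes (small $\lambda$, joint window inequality) are unnecessary detours. The missing idea is to telescope $\hat{\gg}(j)$ itself rather than $\nabla F(\xx(j))-\ee(j)$.
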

	\begin{proof}
		Please refer to Appendix \ref{app_a}.
	\end{proof} 
	From Proposition \ref{pro1}, we find that the  communication MSE $\E \norm{\ee_c(j)}^2$ from the previous rounds also have impact on the model asynchrony MSE $\E\norm{\ee_a(k)}^2$.  
	To obtain a tractable expression of $\E \norm{\ee_c(j)}^2$, we determine it by leveraging the correlation of local gradients.
	In this paper,  we assume that the local gradients $\{\tilde{\gg}_m(k)\mid m \in \cM_k\}$ are  independent, and the local gradient elements $\{\tilde{g}_m^{(k)}(d)\mid m \in \cM_k, \forall d\}$ are  independent and identically distributed (i.i.d.). That is,
	\begin{align}
		&\E \left[ \tilde{\gg}_i(k)\left(\tilde{\gg}_j(k)\right)^\T  \right]=\mathbf{0},~ i\neq j \in \cM_k,\label{corre_1} \\&\E\left[\tilde{\gg}_m(k)\left(\tilde{\gg}_m(k)\right)^\T\right]=\mathbf{I}, ~m \in \cM_k\label{corre}.
	\end{align}
	We thus have the following corollary.
	\begin{corollary}\label{coro1}
		In the UAV-AFL system, under Assumptions \ref{as1}-\ref{as5}, together with the correlation assumption \eqref{corre_1} and \eqref{corre},  the communication MSE $\E\norm{\ee_{c}(k)}^2$ is bounded by
		\begin{align}
			&\E\norm{\ee_{c}(k)}^2\leq
		\left(\delta^2+\alpha_1+\alpha_2\norm{\nabla F(\xx(k))}^2\right)\notag\\&\quad\quad\bigg[|\cM_k|-\frac{(\sum_{m\in \cM_k}|h_m(k)|b_m(k))^2}{\left(\sum_{m\in \cM_k}|h_m(k)|^2b_m(k)^2+\delta^2_n/2\right)}\bigg]\label{c_mse}
		\end{align}
	\end{corollary}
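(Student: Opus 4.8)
The plan is to start from the exact communication-error expression \eqref{com_error} in Proposition \ref{pro1} and take the expectation using the statistical assumptions \eqref{corre_1}–\eqref{corre} on the normalized gradients, plus the independence of the AWGN noise $\nn(k)$ from the gradients. First I would fix a channel use index $c$ and expand the squared modulus inside \eqref{com_error}: the cross terms between distinct devices $i\neq j$ vanish in expectation by \eqref{corre_1}, the cross terms between any device and the noise vanish because $\nn(k)$ is zero-mean and independent, so only the diagonal terms survive. Using \eqref{corre} to set $\E[|r_m^{(k)}(c)|^2]=1$ (the complex modulation \eqref{module} packs two unit-variance real coordinates into one complex symbol, so one should be a little careful whether this yields $1$ or $2$ per complex entry, and then sum over the $C=D/2$ complex uses to recover a factor proportional to $D$ or to $|\cM_k|$ after the $\Theta(k)$ scaling) and $\E[|n^{(k)}(c)|^2]=\delta_n^2$, the sum over $c$ collapses to
\begin{align}
\E\norm{\ee_c(k)}^2 \;\le\; \Theta(k)\Big(\sum_{m\in\cM_k}\big|\Delta_m(k)\big|^2\Big) \;+\; C\,\zeta(k)^2\delta_n^2,
\end{align}
where I have bounded the per-device variance contribution by $\Theta(k)$ via Assumption \ref{as4} and Assumption \ref{as5} exactly as is done in the proof of \eqref{com_error}–\eqref{asy_error} (the factor $\Theta(k)$ already carries the $\delta^2+\alpha_1+\alpha_2\|\nabla F(\xx(k))\|^2$ bound on $\E\|\tilde\gg_m\|^2$-type quantities in the un-normalized domain).

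Next I would substitute $\Delta_m(k)=\frac{\sqrt{v_m(k)}}{|\cM_k|}-\zeta(k)|h_m(k)|b_m(k)$ and, crucially, note that the denoising factor $\zeta(k)$ is a free receiver parameter: the bound above is a quadratic in $\zeta(k)$, so I would minimize over $\zeta(k)$ in closed form. Writing $A=\sum_{m\in\cM_k}|h_m(k)|^2 b_m(k)^2 + \frac{\delta_n^2}{2}\cdot\frac{2C}{?}$ — here again the bookkeeping of how $C\delta_n^2$ versus $\delta_n^2/2$ enters is the one place where the complex-vs-real packing must be tracked, and the statement of the corollary tells us the intended normalization gives the additive noise term $\delta_n^2/2$ inside the denominator — and $B=\sum_{m\in\cM_k}\frac{\sqrt{v_m(k)}}{|\cM_k|}|h_m(k)|b_m(k)$, the optimal $\zeta^\star(k)=B/A$, and plugging back yields a residual of the form $\sum_m \frac{v_m(k)}{|\cM_k|^2} - B^2/A$. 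Finally, to match the clean form \eqref{c_mse} I would invoke the normalization convention $v_m(k)$ scaled so that $\sum_{m\in\cM_k} v_m(k)/|\cM_k|^2$ is replaced by $|\cM_k|$ after absorbing $v_m(k)$ into the $\Theta(k)$ prefactor (consistent with the fact that $\tilde\gg_m$ is unit-variance, so the ``$v_m(k)$'' scale is what converts normalized back to physical gradients and is precisely what $\Theta(k)$ already accounts for); this turns the residual into $\Theta(k)\big[\,|\cM_k| - (\sum_m |h_m(k)|b_m(k))^2/(\sum_m |h_m(k)|^2 b_m(k)^2 + \delta_n^2/2)\,\big]$, which is \eqref{c_mse}.

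The main obstacle I anticipate is not any single inequality but the consistent tracking of normalization constants across three changes of representation: (i) the real-to-complex packing \eqref{module} which halves the dimension and doubles per-symbol variance, (ii) the per-device gradient normalization \eqref{normalization} by $\sqrt{v_m(k)}$ and its compensating reappearance in $\Delta_m(k)$, and (iii) the way $\Theta(k)$ in Proposition \ref{pro1} already encodes the bound on $\E\|\tilde\gg_m(k)\|^2$-scale quantities through Assumptions \ref{as4}–\ref{as5}. Getting the final expression to read exactly $|\cM_k|$ inside the bracket (rather than $|\cM_k|^2$, or $D$, or $C$) requires committing to the specific scaling in which $\E\|\tilde\gg_m(k)\|^2 \le \Theta(k)$ after un-normalization, and then the $\sqrt{v_m(k)}/|\cM_k|$ target coefficient in $\Delta_m(k)$ collapses the per-device sum of squared target coefficients to $|\cM_k|$. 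Once the conventions are pinned down, the rest is the routine zero-cross-term expansion plus the one-dimensional quadratic minimization over $\zeta(k)$ described above.
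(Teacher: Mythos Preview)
Your high-level plan matches the paper's proof exactly: expand \eqref{com_error} using the independence assumptions \eqref{corre_1}--\eqref{corre}, minimize the resulting quadratic over the receive scalar $\zeta(k)$, and then bound the leftover variance factor via Assumptions \ref{as4}--\ref{as5}. Two concrete gaps in your execution, however, would prevent you from reaching \eqref{c_mse} as written.

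First, the displayed inequality $\E\norm{\ee_c(k)}^2 \le \Theta(k)\big(\sum_m |\Delta_m(k)|^2\big) + C\zeta(k)^2\delta_n^2$ is incoherent: $\Delta_m(k)=\tfrac{\sqrt{v_m(k)}}{|\cM_k|}-\zeta(k)|h_m(k)|b_m(k)$ already carries the device-specific scale $v_m(k)$, so there is nothing left for a $\Theta(k)$ prefactor to absorb at this stage. The paper keeps the expansion \emph{exact} in $v_m(k)$, uses $\E[|r_m^{(k)}(c)|^2]=2$ (this resolves your ``$1$ or $2$'' uncertainty and is what produces the $\delta_n^2/2$ in the denominator), obtains $\E\norm{\ee_c(k)}^2=\sum_{c=1}^C\E\big[\sum_{m}2\Delta_m(k)^2+\zeta(k)^2\delta_n^2\big]$, and only \emph{after} minimizing over $\zeta(k)$ converts the surviving scalar $2C\,v_m(k)=D\,v_m(k)\le \|\nabla f_m(\xx(k-\tau_{m,k}))\|^2$ into the $\delta^2+\alpha_1+\alpha_2\|\nabla F(\xx(k))\|^2$ factor via Assumptions \ref{as4}--\ref{as5}.

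Second, your explanation of how the bracket becomes $|\cM_k|-\varepsilon(k)$ is not right: summing the squared target coefficients gives $\sum_{m\in\cM_k} v_m(k)/|\cM_k|^2$, not $|\cM_k|$. The paper closes this in two steps you are missing: (i) it \emph{assumes equal per-device gradient variances} $v_1^{(k)}=\cdots=v_M^{(k)}$ (an assumption not announced in the corollary statement but invoked in the proof), which lets $v_m(k)/|\cM_k|^2$ factor out of both terms to give $\tfrac{2C\,v_m(k)}{|\cM_k|^2}\big[|\cM_k|-\varepsilon(k)\big]$; and (ii) it then uses the crude bound $|\cM_k|^2\ge 1$ to drop the $1/|\cM_k|^2$ prefactor before applying $2C\,v_m(k)\le \|\nabla f_m\|^2$. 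Without the equal-variance step the $\sqrt{v_m(k)}$ weights remain entangled with the channel coefficients in the numerator of the minimized expression, and the clean form \eqref{c_mse} does not emerge.
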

	\begin{proof}
		Please refer to Appendix \ref{app_b}.
	\end{proof}

	Based on   Corollary \ref{coro1} and Proposition \ref{pro1}, we have the following Proposition.
	\begin{proposition}\label{pro_2}
		With Assumptions \ref{as1}-\ref{as5}, $\lambda=1/L$ and Corollary \ref{coro1},  $\E\left[F(\xx({K+1}))-F(\xx^\star)\right]$ is bounded by
	\begin{align}
		&\E\left[F(\xx({K+1}))-F(\xx^\star)\right]\leq \notag\\
		&\quad \prod_{k=1}^{K}\Bigg(1-\frac{\mu}{L}\Big[1-g(\qq_k,\aa_k,\bb_k)\alpha_2\Big]\Bigg)\E\left[F(\xx(1))-F(\xx^\star)\right] \notag\\
		&\quad +\frac{1}{2L}\sum_{k=1}^{K}\Bigg[\Big(g(\qq_k,\aa_k,\bb_k)(\delta^2+\alpha_1)+6\delta^2\Big)\notag\\
		&\quad \times \prod_{k'=k+1}^{K}\Bigg(1-\frac{\mu}{L}\Big[1-g(\qq_{k'},\aa_{k'},\bb_{k'})\alpha_2\Big]\Bigg)\Bigg], \label{pro2}
	\end{align}
		 where $g(\qq_k,\aa_k,\bb_k)= 18S^2+24(\frac{M-\sum_{m=1}^{M}a_m(k)}{M})^2+18S\sum_{j=k-S}^{k}(\sum_{m=1}^{M}a_m(j)-\frac{(\sum_{m=1}^{M}a_m(j)\sqrt{g_0}\norm{\qq(j)-\ww_m}^{-1}b_m(j))^2}{\sum_{m=1}^{M}a_m(j){g_0\norm{\qq(j)-\ww_m}^{-2}}b_m(j)^2+\delta^2_n/2})$.
	\end{proposition}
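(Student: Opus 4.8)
The plan is to establish a one-step descent recursion of the form
\begin{align}
\E\!\left[F(\xx(k+1))-F(\xx^\star)\right]\le \Big(1-\tfrac{\mu}{L}\big[1-g_k\alpha_2\big]\Big)\E\!\left[F(\xx(k))-F(\xx^\star)\right]+\tfrac{1}{2L}\big(g_k(\delta^2+\alpha_1)+6\delta^2\big),\notag
\end{align}
where $g_k=g(\qq_k,\aa_k,\bb_k)$, and then unroll it over $k=1,\dots,K$ to obtain \eqref{pro2}. First I would start from the $L$-smoothness of $F$ (Assumption~\ref{as1}) applied to the update \eqref{update_error} with $\lambda=1/L$, which gives $F(\xx(k+1))\le F(\xx(k))-\tfrac{1}{2L}\norm{\nabla F(\xx(k))}^2+\tfrac{1}{2L}\norm{\ee(k)}^2$ after completing the square. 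Taking expectations, the task reduces to bounding $\E\norm{\ee(k)}^2$. Using the decomposition \eqref{error_term} and the inequality $\norm{a+b+c}^2\le 3(\norm{a}^2+\norm{b}^2+\norm{c}^2)$, I split $\E\norm{\ee(k)}^2\le 3\E\norm{\ee_a(k)}^2+3\E\norm{\ee_d(k)}^2+3\E\norm{\ee_c(k)}^2$ and substitute the bounds \eqref{de_sel_error}, \eqref{asy_error} from Proposition~\ref{pro1} together with the communication-MSE bound \eqref{c_mse} from Corollary~\ref{coro1}.

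The delicate point is the recursive term $6T\,\E\sum_{j=k-S}^{k-1}\norm{\ee_c(j)}^2$ appearing in \eqref{asy_error}: the asynchrony error at time $k$ depends on communication errors from the previous $S$ slots. I would handle this by bounding each $\E\norm{\ee_c(j)}^2$ for $j\in\{k-S,\dots,k\}$ via Corollary~\ref{coro1}, writing the bracketed channel-dependent quantity explicitly in terms of $a_m(j)$, $\norm{\qq(j)-\ww_m}$ and $b_m(j)$, and then absorbing the $\alpha_2\norm{\nabla F(\xx(j))}^2$ factors. Here I expect the main obstacle: strictly, $\norm{\nabla F(\xx(j))}^2$ at past slots $j<k$ is not the same as $\norm{\nabla F(\xx(k))}^2$, so to collapse everything into a clean recursion in the single quantity $\E[F(\xx(k))-F(\xx^\star)]$ one must either invoke Assumption~\ref{as4} (with staleness $a_{m,k}\le 2S$, which is exactly why that bound is stated for $2S$ rather than $S$) to replace past gradients by the current one, or assume the relevant gradient norms are comparable across the window. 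I would make this substitution explicit, collect all the $\delta^2$, $\alpha_1$, constant terms into the additive noise $g_k(\delta^2+\alpha_1)+6\delta^2$ and all the $\alpha_2\norm{\nabla F(\xx(k))}^2$ coefficients into $g_k\alpha_2$, where $g_k$ aggregates the constants $18S^2$ (from $6S^2$ times the $3$ and further factors through the nested $\ee_c$ terms), $24\big(\tfrac{M-\sum_m a_m(k)}{M}\big)^2$ (from $3\times 8\Theta(k)$ in \eqref{de_sel_error}), and the summed channel term $18S\sum_{j=k-S}^{k}\big(|\cM_j|-\tfrac{(\sum_m a_m(j)\sqrt{g_0}\norm{\qq(j)-\ww_m}^{-1}b_m(j))^2}{\sum_m a_m(j)g_0\norm{\qq(j)-\ww_m}^{-2}b_m(j)^2+\delta_n^2/2}\big)$, noting $|\cM_j|=\sum_m a_m(j)$ and $|h_m(j)|^2=g_0\norm{\qq(j)-\ww_m}^{-2}$.

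Next I would use strong convexity (Assumption~\ref{as2}), which via the Polyak--Łojasiewicz inequality $\norm{\nabla F(\xx(k))}^2\ge 2\mu\big(F(\xx(k))-F(\xx^\star)\big)$ converts the negative $-\tfrac{1}{2L}\norm{\nabla F(\xx(k))}^2$ term and the positive $\tfrac{1}{2L}g_k\alpha_2\norm{\nabla F(\xx(k))}^2$ term jointly into $-\tfrac{\mu}{L}[1-g_k\alpha_2]\big(F(\xx(k))-F(\xx^\star)\big)$, yielding the claimed one-step recursion. Finally, I would unroll: applying the recursion from $k=K$ down to $k=1$ produces the product $\prod_{k=1}^{K}\big(1-\tfrac{\mu}{L}[1-g_k\alpha_2]\big)$ multiplying the initial gap $\E[F(\xx(1))-F(\xx^\star)]$, plus the telescoped sum $\tfrac{1}{2L}\sum_{k=1}^{K}\big(g_k(\delta^2+\alpha_1)+6\delta^2\big)\prod_{k'=k+1}^{K}\big(1-\tfrac{\mu}{L}[1-g_{k'}\alpha_2]\big)$, which is exactly \eqref{pro2}. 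A minor bookkeeping subtlety is the boundary of the staleness window (terms with $j<1$), which I would absorb into the initial-model bound or handle by convention $\ee_c(j)=0$ for $j\le 0$; I do not expect this to affect the stated form of the bound.
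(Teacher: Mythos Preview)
Your proposal is correct and follows essentially the same route as the paper's proof: invoke the smoothness-based descent inequality (the paper cites it as Lemma~2.1 of \cite{friedlander2012hybrid}), bound $\E\norm{\ee(k)}^2\le 3\E\norm{\ee_a(k)}^2+3\E\norm{\ee_d(k)}^2+3\E\norm{\ee_c(k)}^2$ via Proposition~\ref{pro1} and Corollary~\ref{coro1}, combine the gradient terms into $-\tfrac{1}{2L}(1-g_k\alpha_2)\norm{\nabla F(\xx(k))}^2$, apply the PL inequality, and unroll. Your explicit treatment of the past-slot term $\sum_{j=k-S}^{k-1}\E\norm{\ee_c(j)}^2$ through Assumption~\ref{as4} with staleness window $2S$ is exactly the mechanism the paper relies on (and is indeed why that assumption is stated for $2S$); the paper's appendix leaves this step implicit, so your write-up is in fact more transparent on this point.
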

	\begin{proof}
		Please refer to Appendix \ref{app_c}.
	\end{proof}
	We see from Proposition \ref{pro_2} that  the overall convergence  $\E\left[F(\xx({K+1}))-F(\xx^\star)\right]$ is bounded by a quantity depending on UAV trajectory $\mQ$, device selection $\mA$,  and device transmit amplitude gain $\mB$.

	\section{System Optimization}\label{sys_opt}
	In this section, to improve the learning efficiency of the UAV-AFL system, we conduct a systematic design guided by the analysis in Section \ref{convergence}. Specifically, we establish the objective function based on  Proposition \ref{pro_2} and jointly optimize $\{\mQ,\mA,\mB\}$ to maximize the learning performance.
	\subsection{Problem Formulation}
	 From Proposition \ref{pro_2},   the condition  $1-\frac{\mu}{L}\left[1-g(i)\alpha_2\right]\leq 1$ must be held to guarantee convergence. Under this condition, as  $K \to \infty$,  $\E\left[F(\xx({K+1}))-F(\xx^\star)\right]$ is dominated by the second term on  the right hand side of \eqref{pro2}, i.e.,
	\begin{align}
		&\lim_{K\to \infty}\E\left[F(\xx({K+1}))-F(\xx^\star)\right]\leq \notag \\
		&\sum_{k=1}^{K}\frac{1}{2L}\left[g(\qq_k,\aa_k,\bb_k)(\delta^2+\alpha_1)+6\delta^2\right] \notag \\
		&\times \prod_{k'=k+1}^{K}\left(1-\frac{\mu}{L}\left[1-g(\qq_{k'},\aa_{k'},\bb_{k'})\alpha_2\right]\right). \label{asyp}
	\end{align}
	
	With the objective in \eqref{asyp}, we formulate the UAV-AFL   communication-learning design problem over  $\{\mQ,\mA,\mB\}$  as
	\begin{subequations}\label{eq28}
		\begin{align}
			\min_{\mQ,\mA,\mB}& \quad \sum_{k=1}^{K}\frac{1}{2L}\left[g(\qq_k,\aa_k,\bb_k)(\delta^2+\alpha_1)+6\delta^2\right]\notag\\&\times\prod_{k'=k+1}^{K}\left(1-\frac{\mu}{L}\left[1-g(\qq_k',\aa_k',\bb_k')\alpha_2\right]\right)\label{obj_1}\\
			\text{s.t.}
			&~~\,\sum_{k}^{k+S-1}a_m(k)\geq 1,\forall m,k\leq K-S+1,\label{staleness}\\
			&~~\,\sum_{k}^{k+c_m}a_m(k)\leq 1, \forall m, k \leq K-c_m, \label{active}\\
			&\quad
			a_m(k)\in \{0,1\}, \forall m,k\label{binary}\\ 
			&\quad {b_m(k)^2} \leq {P_0}/{2},\forall m,\eqref{mech_1}-\eqref{mech_3}\label{con_last}.
		\end{align}
	\end{subequations}
	where  \eqref{binary} denotes the binary constraints for the device selection variable $\mA$; \eqref{staleness} is the staleness constraint from Assumption \ref{as3}, which is introduced to mitigate the impact of stale models on the learning process; \eqref{active} represents that the devices cannot be selected during the local computing process; ${b_m(k)^2} \leq {P_0}/{2}$ is the device transmit power constraint;  \eqref{mech_1}-\eqref{mech_3} are the mechanical constraints of the UAV-PS.

	\begin{remark}
		Problem \eqref{eq28} is a mixed-integer programming problem. To address this, we use geometric programming (GP) to transform the objective function \eqref{obj_1} into a tractable convex form. Then, an efficient two-layer penalty based  algorithm using alternative optimization (AO) and successive convex approximation (SCA) is developed to solve this problem, as discussed in what follows.
	\end{remark}
	\subsection{Problem Transform via Geometric Programming } 
	To deal with the non-convex objective function, we introduce auxiliary variables $\pp\in \mathbb{R}^{2K}$ and $\yy\in \mathbb{R}^{2K}$ to recast problem  \eqref{eq28} as 
	\begin{subequations}\label{recastpro1}
		\begin{align}
			& \min_{\mQ,\mA,\mB,\pp,\yy} \quad \log \left(\sum_{k=1}^{K}e^{\cc_k^\T \yy}\right)\\
			\text{s.t.}\quad&
			\frac{\delta^2+\alpha_1}{2L}g(\qq_k,\aa_k,\bb_k)+3\frac{\delta^2}{L}\leq 	p_k, \forall k,\label{33b}\\
			&1\!\!-\!\!\frac{\mu}{L}\!\!+\!\!\frac{\mu}{L}g(\qq_k,\aa_k,\bb_k)\alpha_2 \!\leq\! p_{K+k},\forall k,\label{33c}\\
			&p_i\leq e^{y_i}, \forall i\label{33d}
			\\& \eqref{binary}-\eqref{con_last},
		\end{align}
	\end{subequations}
	where $\cc_k\in \mathbb{R}^{2K}$ with  $c_{k}^k=c^{K+k+1}_k=c^{K+k+2}_k=\cdots=c^{2K}_k=1$ and  other elements of $\cc_k$ are $0$.
	\begin{proposition}\label{pro_3}
		Problem \eqref{recastpro1} is an equivalent  form of \eqref{eq28}.
	\end{proposition}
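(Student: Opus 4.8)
The plan is to show that \eqref{recastpro1} and \eqref{eq28} share the same optimal value and the same set of optimal $\{\mQ,\mA,\mB\}$, by arguing that the auxiliary variables $\pp$ and $\yy$ can always be chosen to make the reformulation tight without restricting the feasible region in $\{\mQ,\mA,\mB\}$. First I would observe that the substitution is a standard epigraph-style transformation: each product term $\frac{\delta^2+\alpha_1}{2L}g(\qq_k,\aa_k,\bb_k)+3\frac{\delta^2}{L}$ and each factor $1-\frac{\mu}{L}+\frac{\mu}{L}g(\qq_k,\aa_k,\bb_k)\alpha_2$ in \eqref{obj_1} is nonnegative (the latter by the convergence condition $1-\frac{\mu}{L}[1-g\alpha_2]\le 1$ together with nonnegativity of $g$), so introducing $p_k,p_{K+k}$ as upper bounds via \eqref{33b}--\eqref{33c} and then $y_i$ via $p_i\le e^{y_i}$ in \eqref{33d} is valid.

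The key steps, in order, are as follows. \textbf{Step 1:} Given any feasible point of \eqref{eq28}, define $p_k$ and $p_{K+k}$ to equal the respective expressions on the left-hand sides of \eqref{33b} and \eqref{33c}, and set $y_i=\log p_i$; verify that \eqref{33b}--\eqref{33d} hold with equality and that the objective $\log(\sum_k e^{\cc_k^\T\yy})$ equals the original objective \eqref{obj_1}. This requires unpacking the definition of $\cc_k$: the entry pattern $c_k^k=c_k^{K+k+1}=\cdots=c_k^{2K}=1$ means $\cc_k^\T\yy=y_k+\sum_{k'=k+1}^{K}y_{K+k'}=\log p_k+\sum_{k'=k+1}^{K}\log p_{K+k'}$, so $e^{\cc_k^\T\yy}=p_k\prod_{k'=k+1}^{K}p_{K+k'}$, which is exactly the $k$-th summand of \eqref{obj_1} (the factor $\frac{1}{2L}$ being absorbed into $p_k$), and the monotone $\log(\cdot)$ wrapper does not change the argmin. \textbf{Step 2:} Conversely, given any feasible point of \eqref{recastpro1}, note that since all the product factors and $\delta^2/L$ terms are nonnegative, the chain of inequalities \eqref{33b}--\eqref{33d} gives $e^{\cc_k^\T\yy}\ge$ ($k$-th summand of \eqref{obj_1}), hence $\log(\sum_k e^{\cc_k^\T\yy})\ge\log(\text{objective of \eqref{eq28}})$, so the reformulated objective is never smaller; combined with Step 1 this yields equality of optimal values and preservation of optimizers after projecting out $(\pp,\yy)$.

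The main obstacle I anticipate is not analytical depth but bookkeeping precision: getting the index shifts in $\cc_k$ exactly right so that $\cc_k^\T\yy$ reproduces the product $\prod_{k'=k+1}^{K}$ in \eqref{obj_1} with no off-by-one error, and confirming that the $\log$-sum-exp wrapper together with the inner $e^{y_i}$ relaxation is genuinely tight at optimality (i.e. that \eqref{33d} can be taken as equality, which follows because $e^{y_i}$ is increasing in $y_i$ and $y_i$ appears only through the increasing objective, so any slack in \eqref{33d} can be removed by decreasing $y_i$). I would also explicitly record that constraints \eqref{binary}--\eqref{con_last} are carried over verbatim, so the feasible set in $\{\mQ,\mA,\mB\}$ is unchanged and the equivalence is complete; with these pieces in place the proof is short.
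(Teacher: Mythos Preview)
Your proposal is correct and follows essentially the same approach as the paper's proof: both introduce the intermediate problem with objective $\sum_k p_k\prod_{k'=k+1}^K p_{K+k'}$ and argue that the upper-bound constraints \eqref{33b}--\eqref{33d} are active at optimality by monotonicity of the objective in the auxiliary variables. Your version is simply more explicit in unpacking the index structure of $\cc_k$, in noting the nonnegativity needed for $y_i=\log p_i$ to be well-defined, and in recording that the outer $\log$ is monotone; the paper's proof compresses these into a single monotonicity observation and leaves the $\yy$-step to the reader by analogy.
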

	\begin{proof}
		The  auxiliary variable $\pp$ satisfying \eqref{33b} and \eqref{33c} is used to recast  \eqref{eq28} as
		\begin{subequations}\label{recast_proof}
			\begin{align}
				\min_{\mQ,\mA,\mB,\pp}& \quad \sum_{k=1}^{K}p_k\prod_{k'=k+1}^{K}p_{K+k'}\label{obj_p}\\
				\text{s.t.}
				&\quad \eqref{33b},\eqref{33c},\eqref{binary}-\eqref{con_last}.
			\end{align}
		\end{subequations}
	
		Note that the objective \eqref{obj_p} monotonically increases with respect to $\pp$. Hence, the optimal $\pp$ must be obtained  at the lower boundary of the feasible region  of problem \eqref{recast_proof}. Consequently,  at the optimum of \eqref{recast_proof}, constraints  \eqref{33b} and \eqref{33c} are satisfied with equality. This means the equivalence of problems \eqref{recast_proof} and \eqref{eq28}. Moreover, the equivalence between  \eqref{recast_proof} and \eqref{recastpro1} can be proved similarly by considering  auxiliary variable $\yy$. The proof is thus complete.
	\end{proof}
	
	\subsection{Penalty Based Binary Constraint} 
	The conventional method \cite{9891794},\cite{wu2018joint} for tackling the binary device selection constraint \eqref{binary} is to convert the binary elements of $\mA$ to continuous-valued variables in optimization and then rounding the optimized  $\mA$. However,  the obtained $\mA$ by this way may not satisfy the staleness constraint \eqref{staleness} and the computing time constraint \eqref{active}. In this paper, a penalty based method is used to handle this constraint. We initially introduce auxiliary variables to transform \eqref{staleness} into a series of equality constraints. Subsequently, to ensure the satisfaction  of these equality constraints, we incorporate them into the objective function in the form of penalty terms.   Specifically,  we introduce 
	$\bar{\mA}=\{\bar{a}_m(k),\forall m,k\}\in \mathbb{R}^{M\times K}$ and rewrite \eqref{binary} as 
	\begin{align}
		a_m(k)(1-\bar{a}_m(k))=0,~a_m(k)=\bar{a}_m(k)\label{equa_bin},\forall m,k.
	\end{align}
	Clearly, the elements of $\mA$ must be either $0$ or $1$ to comply with the constraints \eqref{equa_bin}. We then introduce penalty terms  in the form of \eqref{equa_bin} into the objective function  \cite{bertsekas1997nonlinear}, yielding the following  formulation:
	\begin{subequations}\label{recastpro2}
		\begin{align}
			\min_{\mQ,\mA,\bar{\mA},\mB,\pp,\yy}& \quad \log \left(\sum_{k=1}^{K}e^{\cc_k^\T \yy}\right)+\frac{1}{\eta}\sum_{m=1}^{M}P(\mA,\bar{\mA})\\
			\text{s.t.}
			&\quad \eqref{33b},\eqref{33c},\eqref{33d},\eqref{staleness}-\eqref{con_last},
\		\end{align}
	\end{subequations}
	where $P(\mA,\bar{\mA})=[a_m(k)(1-\bar{a}_m(k))]^2+[a_m(k)-\bar{a}_m(k)]^2$, and $\eta>0$ is employed to penalize any deviation from the equality constraint \eqref{equa_bin}. It can be easily verified that the binary  constraint \eqref{binary} must be satisfied  as $\frac{1}{\eta} \to \infty$. 
%
	\subsection{Inner and Outer Layer Iteration}
	
	In this subsection, we propose a two-layer iterative algorithm. Specifically, in the inner layer, we decompose \eqref{recastpro2} into three sub-problems, wherein the  variables $\bar\mA$, $\mA$ and $\{\mQ,\mB\}$ are optimized separately. In the outer layer, we  gradually decrease the coefficient $\eta$. The details are shown as follows.
	
	\subsubsection{Slack Variable Optimization}
	We first optimize slack variable $\bar{\mA}$ with given $\{\mA,\mQ,\mB\}$. By dropping the irrelevant terms,  
	$\bar{\mA}$ appears exclusively in the objective function as a convex form, and the optimal value of  $\bar{\mA}$ is
	\begin{align}
		\bar{a}_m^\star (k)=\frac{a_m^2(k)+a_m(k)}{1+a_m^2(k)}\label{up_bar}
	\end{align}
	
	\subsubsection{Device Selection Optimization} We now optimize the device selection variable $\mA$. With given $\{\bar{\mA},\mQ,\mB\}$, problem \eqref{recastpro2} can be reformulated as
	\begin{subequations}\label{recastpro3}
		\begin{align}
			\min_{\mA,\pp,\yy}& \quad \log \left(\sum_{k=1}^{K}e^{\cc_k^\T \yy}\right)+\frac{1}{\eta}\sum_{m=1}^{M}P(\mA,\bar{\mA})\label{pro3_obj}\\
			\text{s.t.}
			&\quad g(\aa_k)\leq \Lambda(k),\forall k\label{g_leq_1}\\
			&\quad \eqref{staleness},\eqref{active},\eqref{33d},\label{37c}
		\end{align}
	\end{subequations}
	where $\Lambda(k)=\min\left( \frac{2Lp_k-6\delta^2}{\delta^2+\alpha_1},\frac{Lp_{K+k}+\mu-L}{\mu\alpha_2} \right)$.
	The problem \eqref{recastpro3} is neither convex or quasi-convex because of the non-convex constraints \eqref{g_leq_1} and \eqref{33d}. Here, we introduce  auxiliary variables $\{\ff,\dd\}$  and use the successive convex approximation (SCA) technique to transform  \eqref{recastpro3} to a convex form as
	\begin{subequations}\label{recastpro4}
		\begin{align}
			&\min_{\mA,\pp,\yy,\ff,\dd} \quad \log \left(\sum_{k=1}^{K}e^{\cc_k^\T \yy}\right)+\frac{1}{\eta}\sum_{m=1}^{M}P(\mA,\bar{\mA})\label{pro4_obj}\\
			\text{s.t.}
			&\quad 		f_k\leq \Lambda(k),\forall k,\\
			&\quad   p_i\leq e^{y_i^{t}}+e^{y_i^{t}}(y_i-y_i^{t}), \forall i,\label{exp_y}\\
			&\quad f_k \geq 18S^2 \!\!+\!\! \frac{24}{M^2}\Phi(k)^2 \!\!+\!\! 18S\Psi(k),\forall k, \label{40_cons}	\\
			&\quad 	d_k \leq R(k) \!\!+\!\! (\aa_k\!-\!\aa^t(k))^\T(\Upsilon(k)+\Xi(k)), \forall k,\!\! \label{43_cpms}\\
			&\quad \eqref{staleness},\eqref{active},
		\end{align}
	\end{subequations}
where $\{\mA^t,\yy^t\}$  are the optimized results of $\{\mA,\yy\}$ obtained at the previous inner iteration,  $\aa^t(k)$ is the $k$-th column of $\mA^t$,  $	\Psi(k) =\sum_{j=k-S}^{k}(\sum_{m=1}^{M}a_m(j)-d(j))$, $\Phi(k) = M-\sum_{m=1}^{M}a_m(k)$, $\vv(k)=[|h_1(k)|b_1(k),\cdots,|h_M(k)|b_M(k)]$,  $R(k) =\frac{\aa^t(k)^\T\vv(k)\vv(k)^\T\aa^t(k)}{\aa^t(k)^\T\vv(k)\circ\vv(k)+\delta^2_n/2}$, $\Omega(k) = \frac{2\vv(k)\vv(k)^\T\aa^t(k)}{(\aa^t(k)^\T\vv(k)\circ\vv(k)+\delta^2_n/2)^2}$, $\Upsilon(k)= \Omega(k) \cdot (\aa^t(k)^\T\vv(k)\circ\vv(k)+\delta^2_n/2)$, $	\Xi(k) = -\frac{\aa^t(k)^\T\vv(k)\aa^t(k)^\T\vv(k)\vv(k)\circ\vv(k)}{(\aa^t(k)^\T\vv(k)\circ\vv(k)+\delta^2_n/2)^2}$.

	\begin{proposition}\label{propo4}
		By solving problem \eqref{recastpro4}, we obtain a locally optimal solution of  \eqref{recastpro3} satisfying the Karush-Kuhn-Tucker (KKT) condition, which serves as an upper bound of \eqref{pro3_obj}.
	\end{proposition}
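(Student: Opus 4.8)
\textbf{Proof proposal for Proposition \ref{propo4}.}

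The plan is to verify the three standard ingredients that together certify that solving the convexified program \eqref{recastpro4} yields a KKT point of \eqref{recastpro3} which upper-bounds its objective: (i) every surrogate constraint in \eqref{recastpro4} is a \emph{restriction} of the corresponding constraint in \eqref{recastpro3} (i.e., its feasible set is contained in the original one), (ii) at the linearization point $\{\mA^t,\yy^t,\ff^t,\dd^t\}$ the surrogate and the original constraints \emph{coincide} (the surrogates are tight, with matching function values and gradients), and (iii) the surrogate problem is convex, so its solution is characterized by its own KKT conditions. Given (i)--(iii), the classical SCA argument (see e.g. the inner-approximation framework) transfers the KKT conditions of \eqref{recastpro4} to \eqref{recastpro3}, and the restriction property gives the upper-bound claim since the optimal value of a restricted problem is no smaller than that of the original.

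First I would handle the exponential constraint \eqref{33d}, $p_i \le e^{y_i}$. Since $e^{y}$ is convex, its first-order Taylor expansion at $y_i^t$ lies below it, so \eqref{exp_y}, namely $p_i \le e^{y_i^t} + e^{y_i^t}(y_i - y_i^t)$, implies $p_i \le e^{y_i}$; equality of the right-hand sides and of their $y_i$-derivatives at $y_i = y_i^t$ is immediate. Next I would treat the coupling constraint \eqref{g_leq_1}, $g(\aa_k) \le \Lambda(k)$. Here I introduce $f_k$ as an upper bound on $g(\aa_k)$ and $d_k$ as a lower bound on the fractional ``aggregation-gain'' term $R(k)$ appearing inside $g$; then $f_k \le \Lambda(k)$ together with \eqref{40_cons} (a convex quadratic lower bound on $f_k$ written in terms of $\Phi(k)$ and $\Psi(k)$, with $\Psi$ itself built from the $d_j$'s) and \eqref{43_cpms} (a first-order underestimate of the concave-in-$\aa_k$ ratio $R(k)$, with gradient terms $\Upsilon(k)+\Xi(k)$) jointly imply the original inequality. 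The key calculus facts to record are: the map $\aa \mapsto \frac{(\aa^\T\vv)^2}{\aa^\T(\vv\circ\vv)+\delta_n^2/2}$ is concave on the nonnegative orthant (a quadratic-over-affine form), so its supporting hyperplane at $\aa^t(k)$ is a valid global overestimate of $-R(k)$, i.e. a valid underestimate of $R(k)$; and $\Psi(k)$ is affine decreasing in the $d_j$'s while $\Phi(k)^2$ is convex in $\aa_k$, so the right-hand side of \eqref{40_cons} is a legitimate convex lower bound for the (non-convex) expression defining $g$. Tightness at the linearization point is by construction of the Taylor expansions.

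With (i) and (ii) established constraint-by-constraint, I would then observe that the objective \eqref{pro4_obj} is the original log-sum-exp term (convex) plus the penalty $\tfrac1\eta\sum_m P(\mA,\bar\mA)$, which for fixed $\bar\mA$ is a sum of squares in $\mA$ and hence convex; all constraints in \eqref{recastpro4} are now affine or convex, so \eqref{recastpro4} is a convex program and any solution satisfies its KKT system. Writing out those KKT conditions, substituting the tightness identities from (ii) (so surrogate multipliers become multipliers for the original constraints and the linearized gradients become the true gradients at the solution), and collapsing the auxiliary variables $\{f_k,d_k,p_i,y_i\}$ back into $g(\aa_k)\le\Lambda(k)$ yields exactly the KKT conditions of \eqref{recastpro3}; the restriction property gives $\text{opt}\eqref{recastpro4}\ge\text{opt}\eqref{recastpro3}$, i.e. the stated upper bound. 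The main obstacle I anticipate is step (i) for constraint \eqref{43_cpms}: one must be careful that the Hessian of the ratio term is negative semidefinite \emph{on the relevant domain} (nonnegative $\aa_k$ with the denominator strictly positive, which is guaranteed by the $\delta_n^2/2$ additive term), and that the particular split of the gradient into $\Upsilon(k)$ and $\Xi(k)$ in the statement indeed reproduces $\nabla_{\aa_k} R(k)\big|_{\aa^t(k)}$ — verifying this gradient bookkeeping, together with confirming that chaining the $d_k$-underestimate into $\Psi(k)$ and then into \eqref{40_cons} preserves the bounding direction, is the delicate part; the rest is routine convexity and the boilerplate SCA-to-KKT argument.
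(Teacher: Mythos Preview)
Your overall architecture is the same as the paper's: introduce the auxiliaries $\{\ff,\dd\}$, linearize the two non-convex constraints at $\{\mA^t,\yy^t\}$, check tightness, and invoke the standard SCA-to-KKT argument. The paper carries this out by writing an explicit intermediate problem with the constraints $f_k\ge g(\aa_k)$ and $d_k\le R(\aa_k)$, arguing equivalence via monotonicity, and then Taylor-linearizing the right-hand sides of \eqref{33d} and of $d_k\le R(\aa_k)$.

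There is, however, a genuine slip in your step (i) for constraint \eqref{43_cpms}. You assert that
\[
\aa\;\longmapsto\;\frac{(\aa^{\mathrm T}\vv)^2}{\aa^{\mathrm T}(\vv\circ\vv)+\delta_n^2/2}
\]
is \emph{concave} (``a quadratic-over-affine form''). It is in fact \emph{convex}: it is the composition of the convex quadratic-over-linear function $(u,w)\mapsto u^2/w$ (for $w>0$) with the affine map $\aa\mapsto(\aa^{\mathrm T}\vv,\;\aa^{\mathrm T}(\vv\circ\vv)+\delta_n^2/2)$, and convexity is preserved under affine pre-composition. This is exactly how the paper justifies the linearization, citing \cite[Section 3.2]{boyd2004convex}. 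Your subsequent sentence (``supporting hyperplane \ldots\ is a valid global overestimate of $-R(k)$, i.e.\ a valid underestimate of $R(k)$'') reaches the correct conclusion --- the tangent is a global \emph{lower} bound of $R$, so $d_k\le$ tangent indeed restricts $d_k\le R$ --- but only because the premise is wrong twice: for a concave function the tangent would be an \emph{upper} bound and \eqref{43_cpms} would be a relaxation, not a restriction, breaking both the feasibility and the upper-bound claims. Fix the sign (convex, tangent below) and your argument goes through and matches the paper's.
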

	\begin{proof}
		Please refer to Appendix \ref{app_d}.
	\end{proof}
	\subsubsection{Trajectory and Transmit Power Optimization}  
	Similar to the discussions of \eqref{recastpro3},  with fixed $\{\mA,\bar{\mA}\}$,  \eqref{recastpro2} can be solved by using SCA  to obtain an update of $\{\mQ,\mB\}$. Due to page limitations, we omit the details here. 

	\subsubsection{Outer layer update}In the outer layer, the penalty coefficient $\eta$ is iteratively updated by  $\eta=s\eta$, where $s$ represents a scaling factor with range $0<s<1$. 
	\subsection{Overall Algorithm and Complexity Analysis}
	We summarize the proposed algorithm in Algorithm \ref{alg}.
	\begin{algorithm}[h]
		\caption{Two-layer Iterative Algorithm}
		\label{alg} 
		\begin{algorithmic}[1] 
			\STATE {\textbf{Initialize:}} The initial value $\bar{\mA}^0,\mQ^0,\mA^0$ and $\mB^0$,  iteration index $t=1$,  predefined  precision $\epsilon_1$ and $\epsilon_2$.
			\REPEAT 
			\REPEAT 
			\STATE With given $\{\mQ^t,\mA^t,\mB^t\}$, compute $\bar{\mA}^{t+1}$ based on \eqref{up_bar}.
			\STATE With given $\{\bar{\mA}^{t},\mQ^t,\mB^t\}$, compute $\mA^{t+1}$ by solving problem \eqref{recastpro4}.
			\STATE With given $\{\bar{\mA}^{t},\mA^t\}$, compute $\{\mQ^{t+1},\mB^{t+1}\}$ by solving the convexifed problem. 
			\STATE Calculate  the value $L^{t+1}$ of objective function in \eqref{recastpro2}.
			\STATE $t = t + 1$.
			\UNTIL $\left|(L^{t} - L^{t-1})/L^{t-1}\right| \le \epsilon_1$
			\STATE $\eta=s\eta$.
			\UNTIL the violation of \eqref{equa_bin} falls below the threshold $\epsilon_2$.
			\ENSURE $\{\mQ,\mA,\mB\}$.
		\end{algorithmic}
	\end{algorithm}

We now analyze the complexity of  Algorithm \ref{alg}.
	The computational complexity of  Algorithm \ref{alg} is dominated by steps 5 and 6. We employ the convex optimization tool CVX \cite{boyd2004convex}, which utilizes the standard interior-point method, to address the problems posed in these steps. Consequently, the complexity involved in steps 5 and 6 is $\mathcal{O}((MK)^{3.5})$.
	Therefore, the overall complexity of Algorithm  \ref{alg} is $\mathcal{O}(I_{outer}I_{inner}(MK)^{3.5})$, where $I_{outer}$ and $I_{inner}$ denote the numbers of iterations  to achieve the  precision in the inner and outer layers, respectively.
	
	\section{Numerical Result}
	
	In this section, we present numerical results to evaluate the performance of the proposed UAV-AFL scheme. These results offer valuable insights for the UAV-AFL implementation and demonstrate the superior performance of the proposed scheme.
	\subsection{Simulation Setup}
	Unless otherwise specified, the system parameters are configured as follows. We consider an  area of 1 km × 1 km with a total of $M=20$  edge devices. The maximum  flight speed $v_{\max}$ (or acceleration $a_{\max}$) of the UAV-PS is $50$ m/s (or $15$ m/$\text{s}^2$), and the fixed flying altitude $H=100$ m. We set the initial and final dispatch  point of UAV-PS as $\qq_0=\qq_F(0,0,100)$; the reference channel  gain $g_0=-60$ dB; the noise power $\sigma_n^2=-80$ dBm;  the maximum transmit power of devices  $P_0=0.1$ W; the staleness bound $S=50$, the number of overall time slots $K=10000$; the duration of each time slot $\delta_t=1$ s; the  threshold $\epsilon_1$  (or $\epsilon_2$) is $10^{-2}$ (or $1$).
	Besides, we set  $\mu=0.2$, $L=10$, $\delta=20$, $\alpha_1=100$ and $\alpha_2=0.1$. To provide a clear exhibition  of the UAV-PS trajectory, we chose $K=250$  to  obtain the optimization results, and use this as a basic cycle to  conduct the UAV-AFL system training periodically for $10000$ time slots. 
	
	For the learning configuration, we conduct image classification task using the MNIST dataset. We train a convolutional neural network with two $5\times5$ convolutional layers (each followed by $2\times2$ max pooling), along with a batch normalization layer, a fully connected layer consisting of 50 units, a ReLu activation layer, and finally a softmax output layer. The loss function employed is the cross-entropy loss. The local data of edge devices are i.i.d. and equally drawn from a pool of $5\times 10^4$ images while the test data set contains $10^4$ images.
	
	\subsection{Comparisons Based on Different Staleness Bound}
	In this subsection, we investigate the impact of different staleness bounds $S$ on the performance of the UAV-AFL system. 
	We conduct simulations  in three  scenarios.  In scenario 1, the devices are randomly located. In scenario $2$ and $3$,   the devices are located within different clusters.
	We use the metric $\sigma(\ww)\triangleq\sum_{m=1}^{M}\norm{\ww_m-\overline{\ww}_m}$ to roughly measure the standard deviation of devices' geographical distributions. 
	 We introduce the  communication  normalized mean square error (NMSE) metric, i.e., $10\log_{10}\left(\E\left[\norm{\hat{	\gg}(k)-	\gg(k)}^2\right]/\norm{\gg(k)}^2\right)$  to quantify the communication quality of time slot $k$. Table~\ref{tab:NMSE} shows the average of the communication NMSE under $K$ time slots. All the presented results are averaged over  $50$ Monte Carlo trials. 
		\begin{figure*}[h]
	\setlength{\belowcaptionskip}{-20pt}
	\setlength{\abovecaptionskip}{-0.1cm} 
	\centering
	\subfigure[Trajectory of Scenario 1.]{
		\includegraphics[width=5.1cm, height=1.8in]{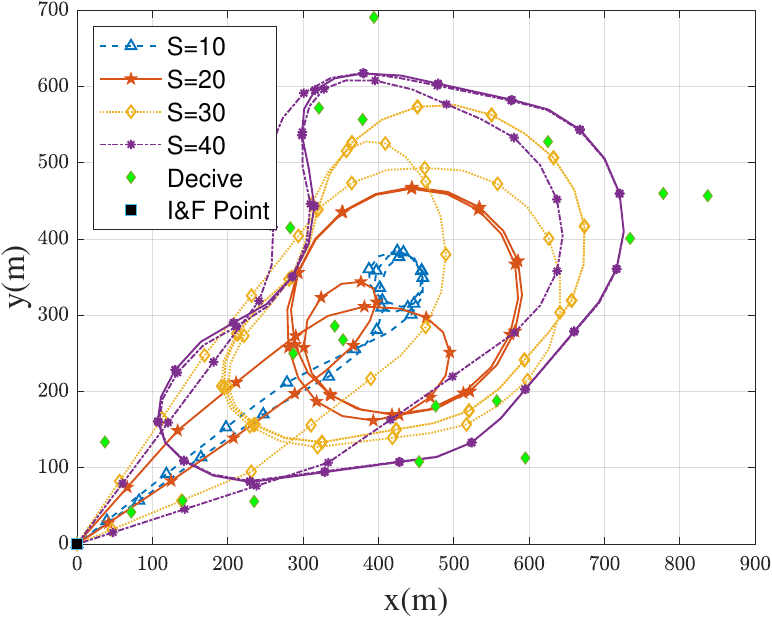}}
	\label{fig1a}
	\subfigure[Trajectory of Scenario 2.]{
		\includegraphics[width=5.1cm, height=1.8in]{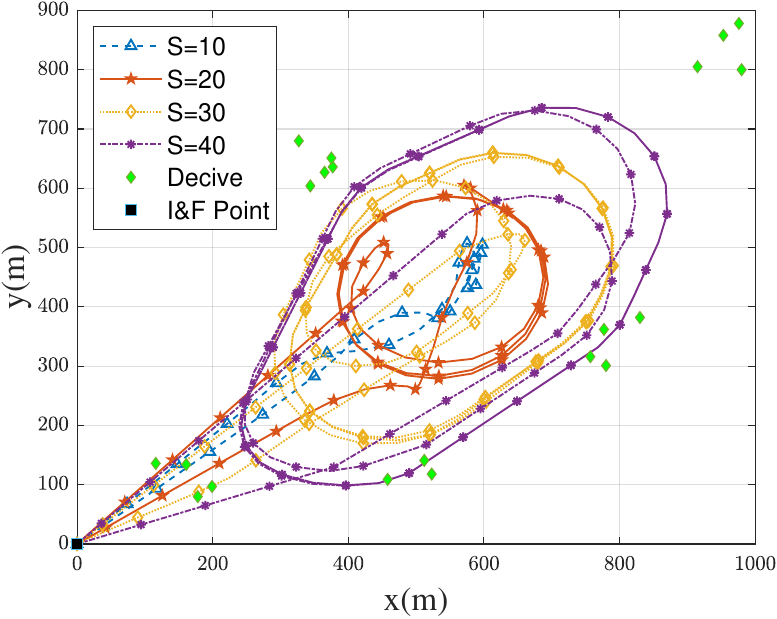}}
	\label{fig2a}
	\subfigure[Trajectory of Scenario 3.]{
		\includegraphics[width=5.1cm, height=1.8in]{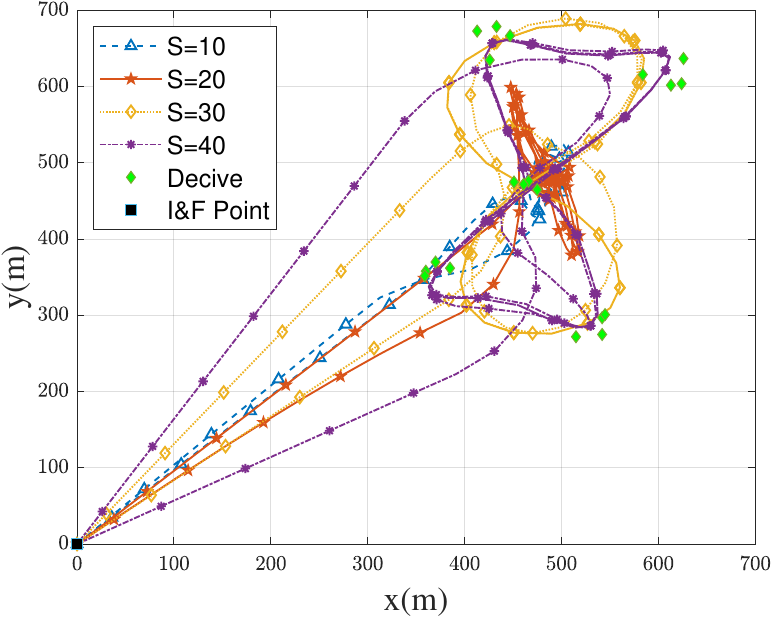}}
	\label{fig3a}
	\caption{Trajectory of different staleness bounds in three scenarios.}
	\vspace{-3mm}
	\label{fig_traj_stale}
\end{figure*}
\begin{figure*}[h]
	\setlength{\belowcaptionskip}{-20pt}
	\setlength{\abovecaptionskip}{-0.1cm} 
	\centering
	\subfigure[Performance of Scenario 1.]{
		\includegraphics[width=5.1cm, height=1.8in]{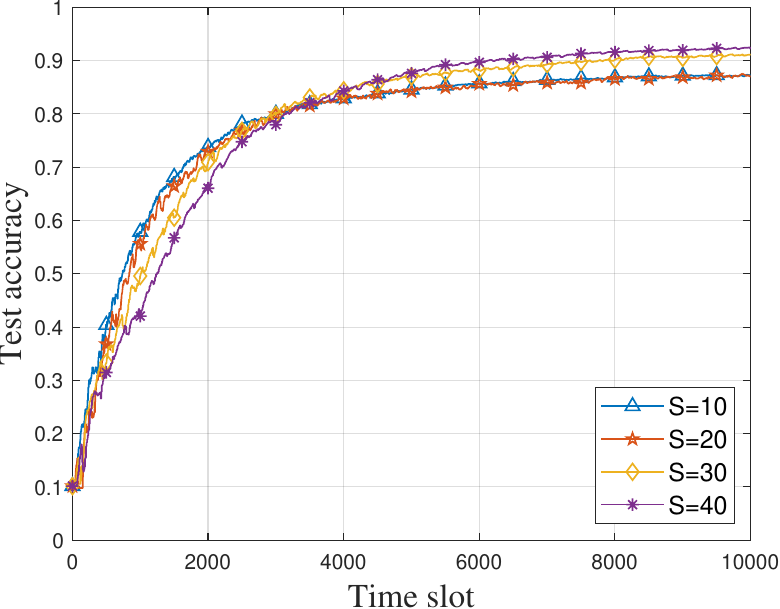}}
	\label{fig1b}
	\subfigure[Performance of Scenario 2.]{
		\includegraphics[width=5.1cm, height=1.8in]{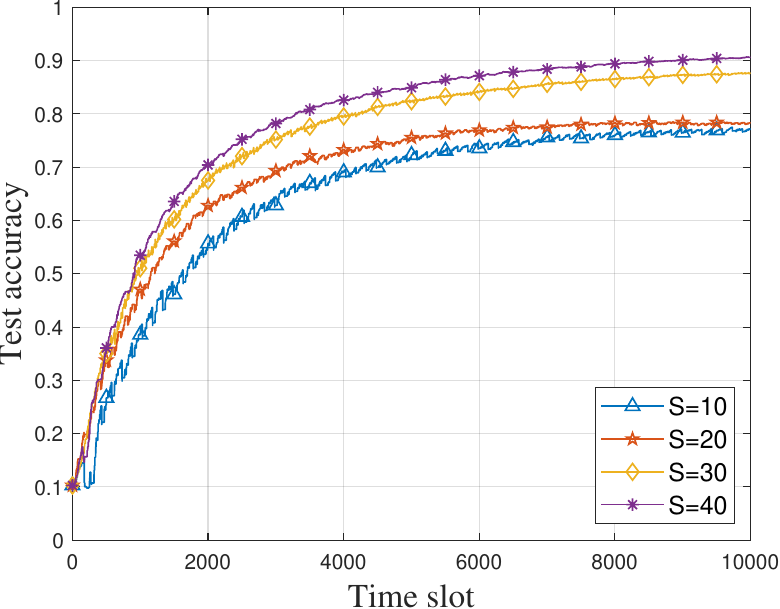}}
	\label{fig2b}
	\subfigure[Performance of Scenario 3.]{
		\includegraphics[width=5.1cm, height=1.8in]{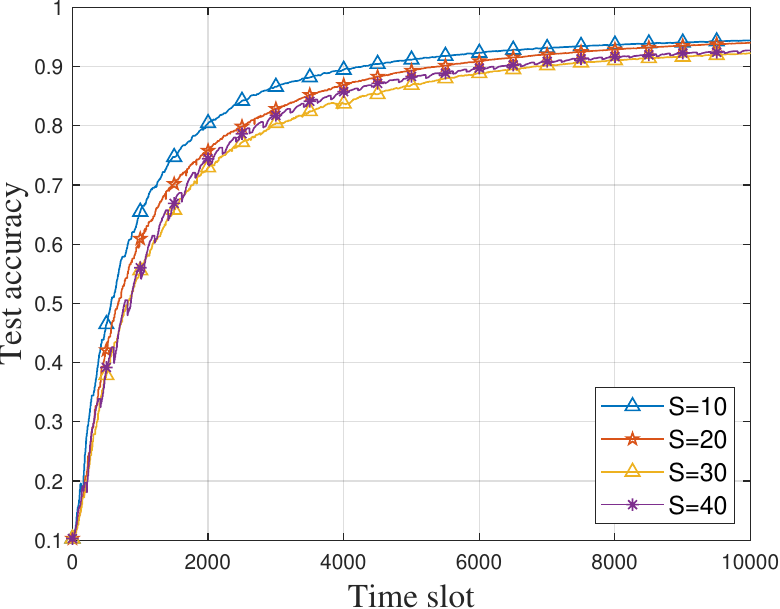}}
	\label{fig3b}
	\caption{Performance of different staleness bounds in three scenarios.}
	\vspace{-3mm}
	\label{fig_acc_stale}
\end{figure*}

	
	From Fig.~\ref{fig_traj_stale}, it is evident that a larger staleness bound  provides  UAV-PS more  time  to approach devices for  gradient collection. This  results in more expansive flight trajectories and  reduced communication NMSE, as shown in Table~\ref{tab:NMSE}.  Conversely, with a more stringent staleness constraint, the UAV-PS is compelled to complete a service cycle more rapidly, resulting in  narrower trajectories.
	In the simulations, we note the UAV-PS  stops communicating with devices under suboptimal channel conditions, especially when there is no proximal device. This behavior underscores the efficacy of our proposed algorithm in balancing device selection with communication quality, where the UAV-PS only aggregates the gradients  when communication quality is favorable.
	
	In Fig.~\ref{fig_acc_stale}, the learning performance varies distinctly across different staleness upper bounds $S$. 
	For devices located in close proximity (scenario 3, with $\sigma(\ww)=153.36$ m), the UAV-PS can have favorable channel quality without the necessity of nearing each device. Under such circumstances, smaller $S$  values exhibit higher learning efficiency, attributable to shorter service cycles and faster model update rate.
	For scenario 2 where devices' location are more dispersed ($\sigma(\ww)=384.53$ m), communication quality becomes the main challenge. Here, a larger  $S$ (i.e., $S=30, 40$) allows the UAV-PS the enough  time to approach devices and enhance the quality of  transmission, leading to improved performance. Lastly,
	scenario 1, with $\sigma(\ww)=269.75$ m, represents an intermediate case between scenarios 2 and 3. In this scenario, at the early stages of training ($k<4000$), the learning accuracy is predominantly impacted by the model update rate, making schemes with a smaller $S$ superior. However, as training progresses ($k>4000$), the adverse effects of communication errors become  pronounced (as the model  stabilizes). In this phase, schemes with a larger $S$ exhibit better test accuracy.
		\begin{table}[h]
		\caption{Average Communication NMSE}
		\centering
		\small
		\begin{tabular}{l|cccccc}
			\hline
			\multicolumn{1}{c|}{\multirow{3}{*}}              & \multicolumn{6}{c}{NMSE ({dB})}                                            \\ \cline{2-7} 
			\multicolumn{1}{c|}{}                                                  & \multicolumn{2}{c|}{Scenario 1 }    & \multicolumn{2}{c|}{Scenario 2} & \multicolumn{2}{c}{Scenario 3}\\ \cline{2-7} 	\hline
			S=10  & \multicolumn{2}{c|}{${-5.3743}$} 
			& \multicolumn{2}{c|}{${-1.4945}$} 
			&\multicolumn{2}{c}{${-12.0915}$}     \\
			S=20 & \multicolumn{2}{c|}{${-7.3219}$} 
			& \multicolumn{2}{c|}{${-3.0754}$} 
			&\multicolumn{2}{c}{${-12.7055}$}         \\
			S=30               & \multicolumn{2}{c|}{${-8.3004}$} 
			& \multicolumn{2}{c|}{${-6.8537}$} 
			&\multicolumn{2}{c}{${-14.3379}$}     \\
			S=40               & \multicolumn{2}{c|}{${-10.4780}$} 
			& \multicolumn{2}{c|}{${-9.4066}$} 
			&\multicolumn{2}{c}{${-16.9662}$}     \\
			\hline
		\end{tabular}
		\label{tab:NMSE}
	\end{table}
	\subsection{Performance Under Heterogeneous  Computing Ability}
	In this subsection, we conduct experiments under the FL system where edge devices have heterogeneous computing abilities.  In the simulations, devices are grouped into five distinct clusters based on their locations, where the devices within  the same cluster are set  to have the same gradient computation time $c_m$ defined in Section \ref{sec3subA}. 

	In Fig.~\ref{fig_44}, we study the performance of UAV-AFL scheme  under two scenarios with different device distributions. In each scenario, both heterogeneous and homogeneous configurations are considered. For the heterogeneous setting, the  computation time  $c_m$ for devices within three  clusters (cyan diamond icons) is $5$ s  while the computation time for devices in other clusters (green pentagon icons) is $30$ s. For the homogeneous  setting, the computation time $c_m$ for each device is set to $15$ s.\footnote{By this setting, the total  time  for all devices to complete computation  are the same under both heterogeneous and homogeneous settings.}
	
	 We  see from Fig.~\ref{fig_44} (a) and (b) that the UAV-PS alters its flight strategy  in the two different settings significantly. In the homogeneous setting, the UAV-PS appears to  traverse and poll each device cluster in sequence. Conversely, in the heterogeneous setting,  the UAV-PS prioritizes devices with superior  computational abilities.  
	 For example, in scenario 1, devices with superior computational capabilities ($c_m=5$ s) are selected, on average,  $1.92$ times more frequently than  less capable devices ($c_m=30$ s).
	Hence, the proposed flight strategy can leverage the  gradient computation time to facilitate the model updates.
	
	In Fig.~\ref{fig_44} (c), we note that the heterogeneity adversely impacts the UAV-AFL learning performance.  This   can be understood from two perspectives. Firstly, the  heterogeneity  prompts  the UAV-PS to rely heavily  on devices with superior computational capabilities for model updates, which introduces a bias in the learned model.
	Secondly, heterogeneous settings often result in larger staleness degree compared to homogeneous settings. We quantify this by calculating the  maximum staleness (MS), defined as $\max{\{\tau_{m,k}\mid\forall m, \forall k \}}$, within our simulation. This can be considered as the  tightest staleness bound $S$ achieve by the practical UAV-AFL system.  For example, in Scenario (a), the MS values for heterogeneous and homogeneous configurations are $40$  and $28$, respectively.  As suggested by Proposition \ref{pro_2},  the learning performance is compromised by the large value of MS.
			\begin{figure*}[t]
		\centering
		\subfigure[Trajectories of scenario 1.]{
			\includegraphics[width=5.1cm, height=1.8in]{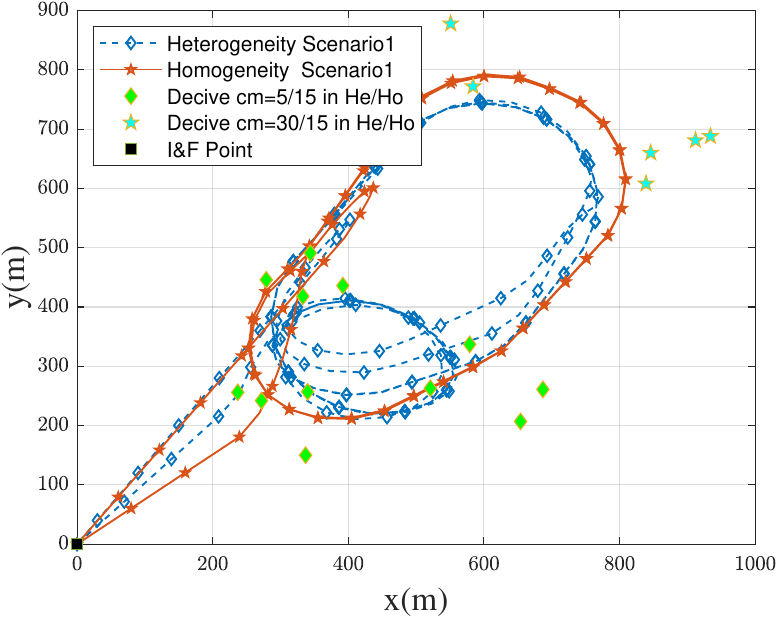}}
		\label{fig4a}
		\subfigure[Trajectories of scenario 2.]{
			\includegraphics[width=5.1cm, height=1.8in]{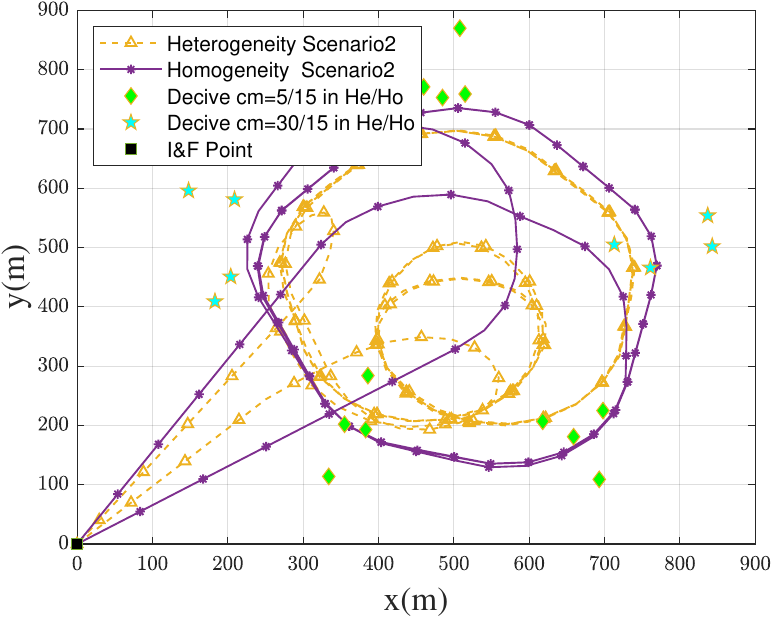}}
		\label{fig4b}
		\subfigure[Performance of scenario 1 and 2.]{
			\includegraphics[width=5.1cm, height=1.8in]{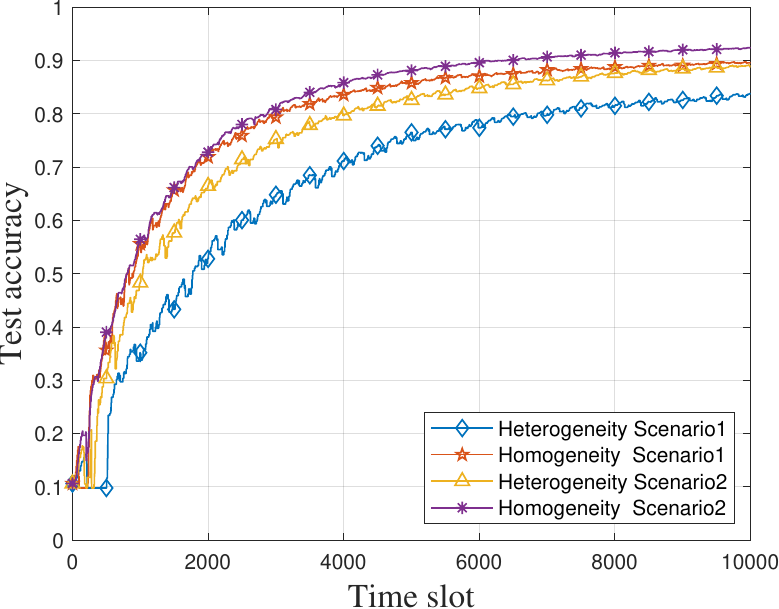}}
		\label{fig4c}
		\caption{System performance under heterogeneous and homogeneous  settings.}
		\vspace{-3mm}
		\label{fig_44}
	\end{figure*}
	\subsection{Performance Comparison With Benchmarks}
	In this subsection, we present a comparison of  the proposed scheme with several existing state-of-the-art solutions to demonstrate the superior UAV-AFL  performance.
	The benchmarks are as follows.
	\begin{itemize}
		\item \textbf{Error  free channel (Error free)}: We use the device schedule result i.e., $\mA$ obtained by the Algorithm \ref{alg} to update the global model without communication error, i.e.,  $\hat{\gg}(k)=\gg(k)$. 
		\item \textbf{Circular flight trajectory (CF)\cite{7888557}}: The UAV-PS flies with a circular trajectory \cite{7888557} centered at $(500,500,H)$ using the maximum velocity.  
		The  radius of trajectory $R$ is determined heuristically by the   average distance minimization problem, i.e., $\min_{R} \sum_{i=1}^{M} (R-\sqrt{x_m^2+y_m^2})^2$. Given the trajectory, the device selection and the transmit amplitude gain are optimized by Algorithm \ref{alg}.
		\item \textbf{Synchronous Fly-Hover-Fly (SFHF)\cite{amiri2020federated}}: The UAV-PS proceeds directly from its initial position to a optimized location. At this location, the UAV-PS serves as a central PS  providing synchronous FL update \cite{amiri2020federated} where all devices upload the gradients over-the-air. The location $\qq$ is determined by
		\begin{align}
			\min_{\qq,\bb}\quad 
			M-\frac{(\sum_{m\in \cM}{\sqrt{g_0\norm{\qq-\ww_m}^{-2}}b_m)^2}}{\sum_{m\in \cM}g_0\norm{\qq-\ww_m}^{-2}b_m^2+\delta^2_n/2}.\label{c_mse1}
		\end{align}
		\item  \textbf{Hierarchical gradient aggregation (HGA) \cite{zhong2022uav}}: The service strategy  in \cite{zhong2022uav} is simulated. Under this approach,   
		the UAV-PS is  programmed to  aggregate the gradients of devices located within a  threshold of $d_{\text{thr}}=250$ m from the UAV-PS over-the-air. After all the gradients are collected, the UAV-PS further combines the previous  aggregated gradients and updates the global model. 
		\item  \textbf{Fully asynchronous FL with error-free communication (FAFL) \cite{xie2019asynchronous}}: Following the fully asynchronous FL principle  \cite{xie2019asynchronous}, only a singular  edge device is permitted to upload its gradient during one time slot. In this paradigm, the UAV-PS is mandated to sequentially approach each device to ensure effective gradient transmission. 
		Given that each transmission is under a point-to-point communication link with low path loss, it is reasonable to assume error-free communication. The  technique for solving the traveling salesman problem (TSP) is employed to optimize the UAV-PS trajectory in this scheme.
	\end{itemize}
	\begin{figure}[h]
		\centering
		\subfigure[Trajectories]{
			\includegraphics[width=5.5cm, height=4.5cm]{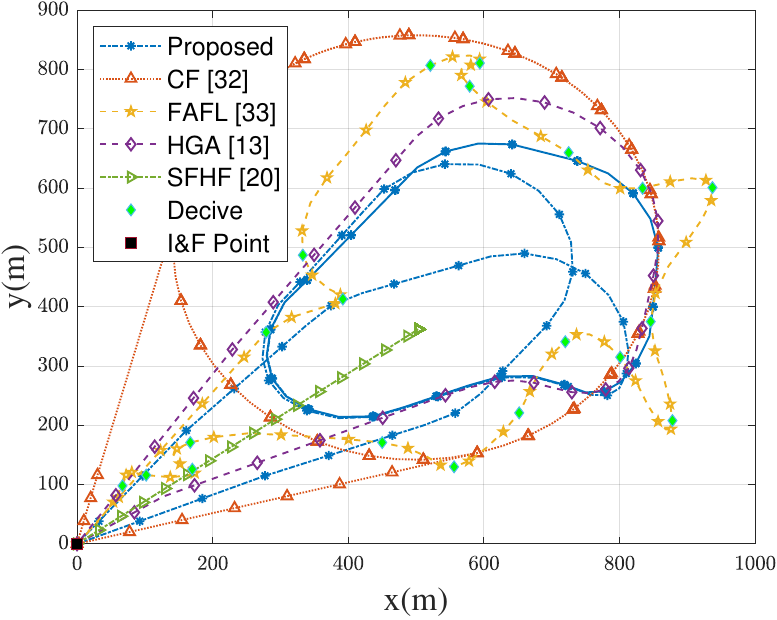}}
		\label{fig5a}
		\subfigure[Test accuracies]{
			\includegraphics[width=5.5cm, height=4.5cm]{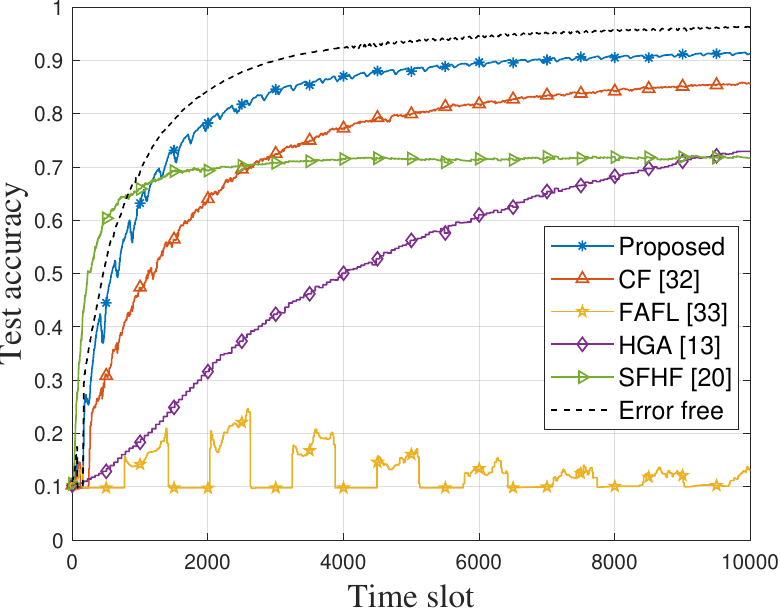}}
		\label{fig5b}
		\caption{Performance comparison of different schemes in scenario 1.}
		\label{fig55}
	\end{figure}
			\begin{figure}[h]
		\centering
		\subfigure[Trajectories]{
			\includegraphics[width=5.5cm, height=4.5cm]{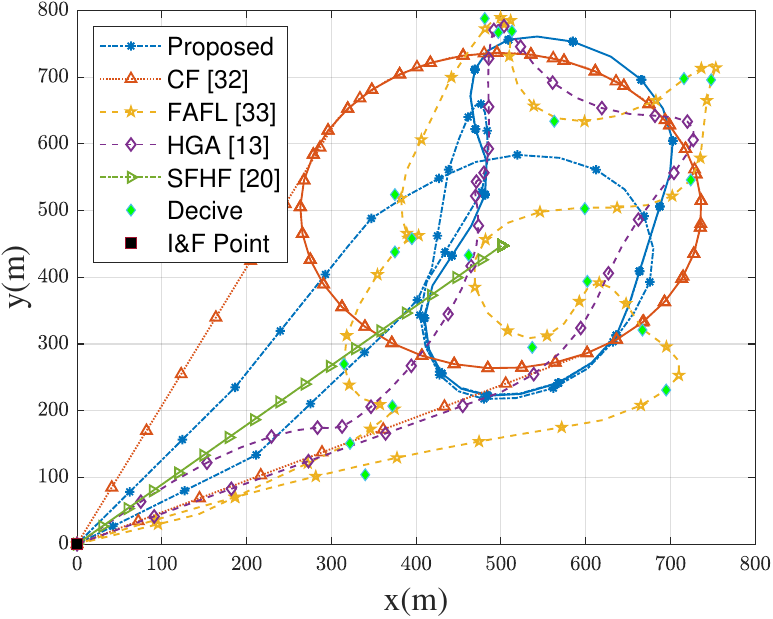}}
		\label{fig6a}
		\subfigure[Test accuracies]{
			\includegraphics[width=5.5cm, height=4.5cm]{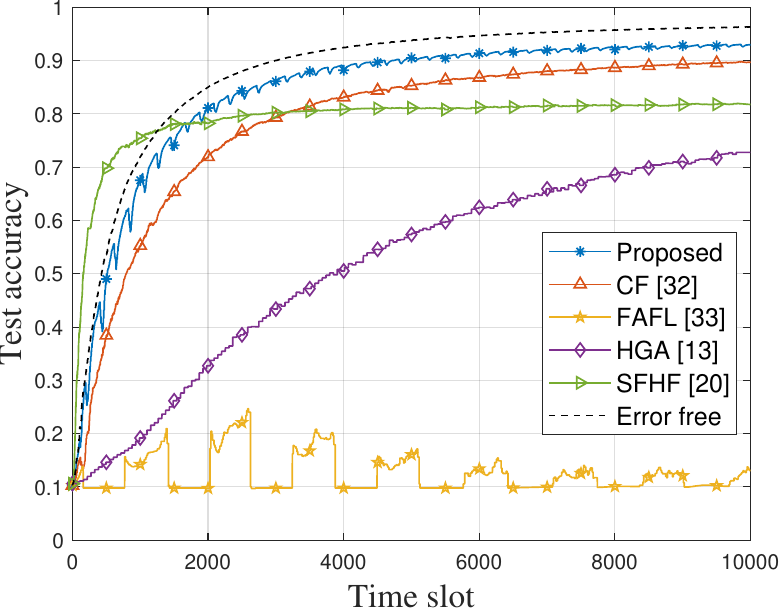}}
		\label{fig6b}
		\caption{Performance comparison of different schemes in scenario 2.}
		\label{fig66}
	\end{figure}
	
	In Fig.~\ref{fig55}, we depict the trajectory and the test accuracy of various schemes in a scenario where devices are dispersedly located ($\sigma(\ww)=340.82$ m).   
	It is evident that the proposed UAV-AFL scheme offers a  substantial performance enhancement  in comparison to the benchmarks, and is similar to the error-free case. 
	The disparities in learning accuracy across different schemes can be attributed to the characteristics of their service strategies.
	 Specifically,
	although the accuracy of SFHF sees a rapid rise at the beginning (due to the use of all devices' gradients for updates), it becomes stabilized at a less-than-optimal level (due to the restriction by the poor communication quality). 
Moreover,  because of the fully asynchronous nature of FAFL, this scheme suffers from a pronounced staleness degree ($\text{MS}=114$). Hence, its accuracy oscillates around $0.1$ throughout the training period.
	Furthermore, the learning accuracy  under the HGA method experiences a  slowly ascent throughout this period. This is because the hierarchical strategy  necessitates the waiting for all devices' gradient collection, which inevitably impair the learning efficiency.  Lastly, the inefficient performance of CF approach highlights the importance of trajectory optimization.

In Fig.~\ref{fig66}, we present the trajectory and the test accuracy in a scenario where the devices are densely located  ($\sigma(\ww)=213.31$ m). The results are similar with those depicted in Fig.~\ref{fig55}, which demonstrates the robust performance gain achieved by the proposed scheme.
	
	\section{Conclusions} \label{sec-conclusion}
	In this paper, we investigated the design of the  UAV-AFL system over edge networks. We proposed   a novel instant-update service strategy to improve the learning efficiency. We introduced  a staleness bound metric to limit the level of AFL model asynchrony and derived an overall convergence bound  by capturing the impact of model asynchrony, device selection and communication errors on the UAV-AFL learning performance. This   bound represents a systematic endeavor to evaluate  AFL performance under the consideration of both communication and learning perspectives. Subsequently, an optimization problem is then formulated to jointly determine UAV-PS trajectory, device selection and over-the-air transceiver design.  To address this, we proposed a two-layer iterative algorithm leveraging   geometric programming (GP) and successive convex approximation (SCA) for efficient problem-solving. Our numerical results  demonstrated the superior  effectiveness of the proposed UAV-AFL scheme when benchmarked against  state-of-the-art approaches.
	\appendices
	\section{Proof of Proposition~\ref{pro1}}\label{app_a}
	 For notation convenience, we denote $\nabla f_m(\xx({k-\tau_{m,k}})$ and  $\nabla F(\xx({k-\tau_{m,k}}))$ as $\nabla f_{m,k}$ and $\nabla F_{m,k}$, respectively. Firstly, the communication MSE $\norm{\ee_c(k)}^2$ is given by
\begin{small}
	\begin{align}
	\E\left[\norm{\ee_c(k)}^2\right]=&\E \bigg[\bigg\|\frac{1}{|\cM_k|}\sum_{m\in \cM_k}\nabla f_{m,k}-\hat{\gg}(k)\bigg\|^2\bigg]\notag
	\end{align}
\end{small}

Plugging \eqref{gra_diver}, \eqref{de_agg}, \eqref{normalization}, \eqref{module}, \eqref{denoising} and \eqref{demodule} into the above equality, we obtain \eqref{com_error}.

We use $\cP_k$ as the complement of $\cM_k$, i.e., $\cP_k\cup \cM_k=\cM$. For device selection MSE $\norm{\ee_d(k)}^2$, we have
	\begin{small}
		\begin{align}
			&\E \left[\norm{\ee_d(k)}^2\right]\notag\\
			=&\E \Big[ \Big\|{\frac{1}{M}\sum_{m\in \cM}\nabla f_{m,k}-\frac{1}{|\cM_k|}\sum_{m\in \cM_k}\nabla  f_{m,k}}\Big\|^2\Big]\notag\\
			=&\E \Big[\frac{M-|\cM_k|}{M|\cM_k|}\Big(\Big\|{\sum_{m\in \cM_k}\!\left(\nabla f_{m,k}\!-\!\nabla F_{m,k}\right)}\bigg\|\!+\!\Big\|\!{\sum_{m\in \cM_k}\!\nabla F_{m,k}}\Big\|\Big)\notag\\&+\frac{1}{M}\Big\|{\sum_{m \in \cP_k}\left(\nabla f_{m,k}-\nabla F_{m,k}\right)}\bigg\|+\frac{1}{M}\Big\|{\sum_{m \in \cP_k}\!\nabla F_{m,k}}\Big\|\Big]^2\notag\\
			\overset{(a)}{\leq}&\E\Big[\frac{2(M\!\!-\!\!|\cM_k|)}{M}\delta\!+\!\frac{M\!\!-\!\!|\cM_k|}{M|\cM_k|}\Big\|\!\!{\sum_{m\in \cM_k}\!\!\!\nabla F_{m,k}}\Big\|\!\!+\!\!\frac{1}{M}\Big\|\!\!{\sum_{m \in \cP_k}\!\!\nabla F_{m,k}}\bigg\|\bigg]^2\notag\\
			\overset{(b)}{\leq}&4\E\Big[\frac{M-|\cM_k|}{M}\delta+\frac{M-|\cM_k|}{M}\sqrt{\alpha_1+\alpha_2\big\|{\nabla F(\xx(k))}\big\|^2}\bigg]^2\notag\\
			\overset{(c)}{\leq}&8\E\Big(\frac{M-|\cM_k|}{M}\Big)^2\left(\delta^2+\alpha_1+\alpha_2\norm{\nabla F(\xx(k))}^2\right)\notag,
		\end{align}
	\end{small}where   $(a)$ is from triangle inequality and Assumption \ref{as5}, $(b)$ is from Assumption \ref{as4}, and $(c)$ is from the  triangle inequality. 

Finally, we  bound the model asynchrony $\norm{\ee_a(k)}^2$.
\begin{small}
	\begin{align*}
		&\E\norm{\ee_a(k)}^2\\=&\E\Big[\Big\|\nabla F(\xx(k))-\frac{1}{M}\sum_{m=1}^{M}\nabla f_{m,k}\Big\|^2\Big]\\
		{\leq}&\frac{1}{M}\sum_{m=1}^{M}\E\left[\norm{(\nabla F(\xx(k))-\nabla f_{m,k})}^2\right]\\
		\overset{(a)}{\leq}&\frac{2}{M}\sum_{m=1}^{M}\E\left[\norm{\nabla F(\xx(k))-\nabla F_{m,k}}^2+\norm{\nabla F_{m,k}-\nabla f_{m,k}}^2\right]\\
		\overset{(b)}{\leq}&\frac{2}{M}\sum_{m=1}^{M}\E\norm{\nabla F(\xx(k))-\nabla F_{m,k}}^2+2\delta^2\\
		\overset{(c)}{\leq}&\frac{2L^2}{M}\sum_{m=1}^{M}\E\norm{\xx(k)-\xx({k-\tau_{m,k}})}^2+2\delta^2
		\\\overset{(d)}{\leq}&2L^2\E\norm{\xx(k)-\xx({k-\tau_{k,\mu}})}^2+2\delta^2,
	\end{align*}	
\end{small}where  $(a)$ is from the triangle inequality, $(b)$ is from Assumption \ref{as5}, $(c)$ is from Assumption \ref{as1}. From Assumption \ref{as3}, since the staleness  is bounded, we can always find a variable $\mu$ satisfying  $\mu=\arg\max_{m\in [M]}\norm{x(k)-\xx({k-\tau_{m,k}})}^2$ and hence the step $(d)$ holds. It then follows that
\begin{small}
	\begin{align*}
		&L^2\E \norm{\xx(k)-\xx({k-\tau_{k,\mu}})}^2\\
		=&L^2\E \Big\|{\sum_{j=k-\tau_{k,\mu}}^{k-1}(\xx({j+1})-\xx(j))}\Big\|^2\\
		=&\E \Big\|\!\!{\sum_{j=k-\tau_{k,\mu}}^{k-1}\!\!\!\!\!\!\Big(\hat{\gg}(j)-\frac{1}{|\cM_j|}\!\!\sum_{m\in \cM_j}\nabla F_{m,j}\Big)\!+\!\!\!\!\!\!\!\!\sum_{j=k-\tau_{k,\mu}}^{k-1}\!\!\!\!\frac{1}{|\cM_j|}\!\!\sum_{m\in \cM_j}\!\!\!\nabla F_{m,j}}\Big\|^2\\
		=&\E \Big\|\sum_{j=k-\tau_{k,\mu}}^{k-1}\frac{1}{|\cM_j|}\sum_{m\in \cM_j}\left(\nabla f_{m,j}-\nabla F_{m,j}\right)\\&+\sum_{j=k-\tau_{k,\mu}}^{k-1}\frac{1}{|\cM_j|}\sum_{m\in \cM_j}\nabla F_{m,j}-\!\!\sum_{j=k-\tau_{k,\mu}}^{k-1}\ee_c(j)\Big\|^2\\
		\overset{(a)}{\leq}& 3\E{\Big\|\sum_{j=k-\tau_{k,\mu}}^{k-1}\frac{1}{|\cM_j|}\sum_{m\in \cM_j}\left(\nabla f_{m,j}-\nabla F_{m,j}\right)\!\Big\|^2}\\&+\!3\E{\Big\|\sum_{j=k-\tau_{k,\mu}}^{k-1}\frac{1}{|\cM_j|}\sum_{m\in \cM_j}\nabla F_{m,j}\Big\|^2}+3\E\Big\|\sum_{j=k-\tau_{k,\mu}}^{k-1}\ee_c(j)\Big\|^2\\
		\overset{(b)}{\leq}&3S\E\sum_{j=k-S}^{k-1}\frac{1}{|\cM_j|}\sum_{m\in \cM_j}\norm{\nabla f_{m,j}-\nabla F_{m,j}}^2\\&+3S\E \sum_{j=k-S}^{k-1}\frac{1}{|\cM_j|^2}\Big\|\sum_{m\in \cM_j}\nabla F_{m,j}\Big\|^2+3S\sum_{j=k-S}^{k-1}\E\norm{\ee_c(j)}^2\\
		\overset{(c)}{\leq}& 3S^2\delta^2+3T\E \sum_{j=k-S}^{k-1}\frac{1}{|\cM_j|}\sum_{m\in \cM_j}\norm{\nabla F_{m,j}}^2+3S\!\!\sum_{j=k-S}^{k-1}\!\!\E\norm{\ee_c(j)}^2\\
		\overset{(d)}{\leq}&3S^2\left(\delta^2+\alpha_1+\alpha_2\norm{\nabla F(\xx(k))}^2\right)+3S\E\sum_{j=k-S}^{k-1}\norm{\ee_c(j)}^2,
	\end{align*}
\end{small}where $(a)$ is from $\norm{\sum_{i=1}^{n}\aa_i}^2\leq \sum_{i=1}^{n}\norm{\aa_i}^2,\forall \aa_i\in \mathbb{R}^d$, $(b)$ is from Assumption \ref{as3}, $(c)$ is from Assumption \ref{as5}, and $(d)$ is from Assumption \ref{as4}. Therefore,  we have
\begin{small}
	\begin{align*}
		\E \norm{\ee_a(k)}^2&\leq 6S^2\left(\delta^2+\alpha_1+\alpha_2\E\norm{\nabla F(\xx(k))}^2\right)\\&+6S\E\sum_{j=k-S}^{k-1}\!\!\!\!\norm{\ee_c(j)}^2+2\delta^2,
	\end{align*}
\end{small}which completes the proof of Proposition \ref{pro1}.
	\section{Proof of Corollary \ref{coro1}}\label{app_b}
	We expand the communication MSE in \eqref{com_error} and obtain
	\begin{small}
		\begin{align}
		&\E\norm{\ee_c(k)}^2 
		\overset{(a)}{=}\sum_{c=1}^{C}\E\Big[\sum_{m\in \cM_k}2\Delta_m(k)^2+\zeta(k)^2\delta^2_n\!\Big]\label{aa_zeta},
		\end{align}
	\end{small}where $	\Delta_m(k) = \frac{\sqrt{v_m{(k)}}}{|\cM_k|} - \zeta(k)|h_m(k)|b_m(k)$, and $(a)$ is from $\E[\nn(k)]=\mathbf{0}$ and $\E[r_m^{(k)}(c)r_m^{(k)}(c)^\dagger]=2$. Note that \eqref{aa_zeta} is a convex quadratic function with respect to $\zeta(k)$, and the optimal $\zeta(k)$ minimizing \eqref{aa_zeta} is given by
	\begin{small}
		\begin{align}
			&\zeta(k)=\frac{\sum_{m\in \cM_k}\frac{\sqrt{v_m{(k)}}}{|\cM_k|}|h_m(k)|b_m(k)}{\sum_{m\in \cM_k}|h_m(k)|^2b_m(k)^2+\delta^2_n/2}\label{zeta}.
		\end{align}
	\end{small}

	Substituting  \eqref{zeta} into \eqref{aa_zeta}, we obtain 
			\begin{small}
		\begin{align}
			\E\norm{\ee_{c}(k)}^2&=
			2\sum_{c=1}^{C}\E\Bigg[\sum_{m\in \cM_k}\frac{{v_m{(k)}}}{|\cM_k|^2}-\varphi(k)\zeta(k)\Bigg]\label{50},
			\end{align}
	\end{small}where $\varphi(k)=\sum_{m\in \cM_k}\frac{\sqrt{v_m{(k)}}}{|\cM_k|}|h_m(k)|b_m(k)$. In addition,  \eqref{50} can be further bounded as 
	\begin{small}
	\begin{align}
			\E\norm{\ee_{c}(k)}^2\!&\overset{(a)}{=}\frac{2C{v_m(k)}}{|\cM_k|^2}\E\Big[|\cM_k|-\varepsilon(k)\Big]
			\notag\\&\overset{(b)}{\leq}{2C{v_m(k)}}\E\Big[|\cM_k|-\varepsilon(k)\Big]\notag\\
			&\overset{(c)}{\leq}\norm{\nabla f_m(\xx(k-\tau_{m,k}))}^2\E\Big[|\cM_k|-\varepsilon(k)\Big]
			\notag\\&\overset{(d)}{\leq} \!\left(\delta^2+\alpha_1+\alpha_2\norm{\nabla F(\xx(k))}^2\right)\Big[|\cM_k|-\varepsilon(k)\Big],\notag
		\end{align}
	\end{small}where $\varepsilon(k)=\frac{(\sum_{m\in \cM_k}|h_m(k)|b_m(k))^2}{\sum_{m\in \cM_k}|h_m(k)|^2b_m(k)^2+\delta^2_n/2}$, step $(a)$ is based on the assumption that the gradient variance of each device is equal, i.e., $v_1^{(k)}=v_2^{(k)}=\cdots= v_m^{(k)}$, step $(b)$ is based on the fact that when  $\E\norm{\ee_c(k)}\neq0$, $|\cM_k|\geq1$, step $(c)$ is from ${	\sum_{d=1}^{D}({g}_m^{(k)}(d)-\bar{{g}}_m^{(k)})^2}=\sum_{d=1}^{D}((g_m^{(k)}(d))^2-(\bar{g}_m^{(k)})^2)\leq \sum_{d=1}^{D}(g_m^{(k)}(d))^2=\norm{\nabla f_m(\xx(k-\tau_{m,k}))}^2$, and step $(d)$ is from Assumption \ref{as4} and \ref{as5}. Thus, we  obtain \eqref{c_mse}.
	\section{Proof of Proposition \ref{pro_2}}\label{app_c}	
	As demonstrated in \cite{friedlander2012hybrid},   Assumptions \ref{as1} and \ref{as2} can provide an upper bound of the loss function $F(\xx(k+1))$ with respect to
	the recursion \eqref{update_error}, as shown in the following lemma.
	\begin{lemma}\label{lemma1}\cite[Lemma 2.1]{friedlander2012hybrid}
		Under Assumptions \ref{as1} and \ref{as2}, with learning rate $\lambda=1/L$, we have
		\begin{small}
			\begin{align}
				\!\!\!\!\E\left[F(\xx(k+1))\right] \!\leq\! \E [F(\xx(k)) \! \! -\! \!  \frac{1}{2L}\!  \!  \norm{\nabla F(\xx(k))}^2\!\!\!+\!\!\frac{1}{2L}\!\! \norm{\ee(k)}^2],\label{error_conver}\!\!\!\!
			\end{align}
		\end{small}where the expectations are taken over the communication noise.
	\end{lemma}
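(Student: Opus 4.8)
The plan is to derive this as the classical smoothness descent bound applied to an inexact gradient step, exploiting the fact that the step size $\lambda=1/L$ is tuned precisely so that the cross term between the true gradient and the error vanishes. First I would record the quadratic upper bound implied by Assumption~\ref{as1}. Since $F$ is differentiable with $L$-Lipschitz gradient, integrating $\nabla F$ along the segment joining $\xx$ and $\yy$ and bounding the gradient increment by $L\norm{\cdot}$ yields the standard descent inequality
\begin{align}
F(\yy) \leq F(\xx) + \nabla F(\xx)^\T(\yy-\xx) + \tfrac{L}{2}\norm{\yy-\xx}^2,\quad \forall \xx,\yy. \notag
\end{align}

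Next I would specialize this to $\xx=\xx(k)$ and $\yy=\xx(k+1)$. By the rewritten update \eqref{update_error} with $\lambda=1/L$, the displacement is $\xx(k+1)-\xx(k)=-\tfrac{1}{L}\big(\nabla F(\xx(k))-\ee(k)\big)$. Substituting this direction into the descent inequality gives
\begin{align}
F(\xx(k+1)) \leq{}& F(\xx(k)) - \tfrac{1}{L}\nabla F(\xx(k))^\T\big(\nabla F(\xx(k))-\ee(k)\big) \notag\\
&+ \tfrac{1}{2L}\norm{\nabla F(\xx(k))-\ee(k)}^2. \notag
\end{align}
The key algebraic step is to expand both terms: the linear term contributes $-\tfrac{1}{L}\norm{\nabla F(\xx(k))}^2+\tfrac{1}{L}\nabla F(\xx(k))^\T\ee(k)$, and the quadratic term contributes $\tfrac{1}{2L}\norm{\nabla F(\xx(k))}^2-\tfrac{1}{L}\nabla F(\xx(k))^\T\ee(k)+\tfrac{1}{2L}\norm{\ee(k)}^2$. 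The two cross terms in $\nabla F(\xx(k))^\T\ee(k)$ cancel exactly because the coefficient $1/L$ of the linear term equals twice the coefficient $1/(2L)$ of the quadratic term—this cancellation is precisely why $\lambda=1/L$ is selected. Collecting the survivors leaves the deterministic bound $F(\xx(k+1)) \leq F(\xx(k)) - \tfrac{1}{2L}\norm{\nabla F(\xx(k))}^2 + \tfrac{1}{2L}\norm{\ee(k)}^2$. Taking expectation over the communication noise on both sides then preserves the inequality and delivers \eqref{error_conver}.

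I do not anticipate a genuine obstacle here, as the statement is the classical inexact-gradient smoothness lemma (\cite[Lemma 2.1]{friedlander2012hybrid}); the only point worth flagging is that this particular inequality is driven by $L$-smoothness (Assumption~\ref{as1}) alone. Strong convexity (Assumption~\ref{as2}) is not invoked in obtaining \eqref{error_conver}—it is instead reserved for the contraction argument in Proposition~\ref{pro_2}, where $\norm{\nabla F(\xx(k))}^2$ is lower-bounded via the Polyak--{\L}ojasiewicz-type consequence of strong convexity to convert the per-step descent into a linear convergence recursion.
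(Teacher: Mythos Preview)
Your argument is correct and is exactly the standard derivation of the inexact-gradient descent lemma. Note that the paper does not supply its own proof of Lemma~\ref{lemma1}; it simply imports the result from \cite[Lemma~2.1]{friedlander2012hybrid}, so there is no in-paper proof to compare against. Your remark that only Assumption~\ref{as1} is actually used in establishing \eqref{error_conver}, with Assumption~\ref{as2} deferred to the contraction step in Proposition~\ref{pro_2}, is accurate and worth keeping.
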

	Then, we bound the overall MSE  $\E\norm{\ee(k)}^2$ in \eqref{error_conver} as\begin{small}
		\begin{align}
	\E	\norm{\ee(k)}^2\overset{(a)}{\leq}\notag& 3\E[ \norm{\ee_c(k)}^2]+3\E[\norm{\ee_d(k)}^2]+3\E[\norm{\ee_a(k)}^2]\notag\\
	\overset{(b)}{\leq}&\alpha_2g(\qq_k,\aa_k,\bb_k)\norm{\nabla F(\xx(k))}^2\notag\\&\quad+(\delta^2+\alpha_1)g(\qq_k,\aa_k,\bb_k)+6\delta^2\label{e_over},
		\end{align}
\end{small}where $\aa_k \in \mathbb{R}^M$, $\bb_k \in \mathbb{R}^M$ and $\qq_k \in \mathbb{R}^3$ be the $k$-th column of $\mA$, $\mB$ and $\mQ$, respectively, $g(\qq_k,\aa_k,\bb_k)= 18S^2+24(\frac{M-\sum_{m=1}^{M}a_m(k)}{M})^2+18S\sum_{j=k-S}^{k}(\sum_{m=1}^{M}a_m(j)-\frac{(\sum_{m=1}^{M}a_m(j)\sqrt{g_0}\norm{\qq(j)-\ww_m}^{-1}b_m(j))^2}{\sum_{m=1}^{M}a_m(j){g_0\norm{\qq(j)-\ww_m}^{-2}}b_m(j)^2+\delta^2_n/2})$, $(a)$ is from the triangle inequality, and $(b)$ is based on  \eqref{de_sel_error}, \eqref{asy_error} and \eqref{c_mse},
	
	Substituting \eqref{e_over} into \eqref{error_conver}, and using  $\norm{\nabla F(\xx(k))}^2\geq2\mu\left(F(\xx(k))-F(\xx^\star)\right)$ (from Assumption \ref{as2}), we have
	\begin{small}
	\begin{align}
		&F(\xx({k+1}))-F(\xx^\star)\leq{\frac{1}{2L}\left[g(\qq_k,\aa_k,\bb_k)(\delta^2+\alpha_1)+6\delta^2\right]}\notag\\&+{\big(1-\frac{\mu}{L}\left[1-\alpha_2g(\qq_k,\aa_k,\bb_k)\right]\big)}\left(F(\xx(k))-F(\xx^\star)\right).\label{fin_con}
	\end{align}
\end{small}

Recursively applying \eqref{fin_con} for $K$ times, we obtain \eqref{pro2}.
	
	\section{Proof of Proposition \ref{propo4}}\label{app_d}
	We introduce auxiliary variable $\ff\in\mathbb{R}^K$ satisfying $f_k\geq g(\aa_k),\forall k$, and $\dd\in\mathbb{R}^K$ satisfying $d_k\leq \frac{(\sum_{m=1}^{M}a_m(k)|h_m(k)|b_m(k))^2}{\sum_{m=1}^{M}a_m(k)|h_m(k)|^2b_m(k)^2+\delta^2_n/2}, \forall k$ and reformulate \eqref{recastpro3} as
	\begin{small}
		\begin{subequations}\label{recastpro5}
			\begin{align}
				&\min_{\mA,\pp,\dd,\boldsymbol{\omega},\yy} \quad\log \left(\sum_{k=1}^{K}e^{\cc_k^\T \yy}\right)+\frac{1}{\eta}\sum_{m=1}^{M}P(\mA,\bar{\mA})\\
				\text{s.t.}
				&\quad 		f_k \geq g(\aa_k),\forall k,\label{46b}\\
				&\quad  d_k\leq \frac{(\sum_{m=1}^{M}a_m(k)|h_m(k)|b_m(k))^2}{\sum_{m=1}^{M}a_m(k)|h_m(k)|^2b_m(k)^2+\delta^2_n/2}, \forall k,\label{dc_cons1}\\
					&\quad 		f_k\leq \min\Big(\frac{2Lp_i-6\delta^2}{\delta^2+\alpha_1},\frac{Lp_{K+k}+\mu-L}{\mu\alpha_2} \Big),\forall k,\\
				&\quad 	 \eqref{staleness},\eqref{active},\eqref{33d}.
			\end{align}
		\end{subequations}
	\end{small}

 When \eqref{recastpro5} is minimized,  at the optimal point 
 \eqref{46b} and \eqref{dc_cons1} must be held with equality; hence problem \eqref{recastpro5} is equivalent  to problem \eqref{recastpro3}.
	
Problem \eqref{recastpro5} now includes non-convex constraints \eqref{33d} and \eqref{dc_cons1}. 
Denote $\{\mA^t,\yy^t\}$ as the  optimizer  $\{\mA,\yy\}$ obtained at the previous inner iteration. 
Note that \eqref{33d} is a constraint where a linear function is less than a convex function. Since the  first-order Taylor expansion serves as a global lower bound of any convex function, \eqref{33d}  can be  convexified with a surrogate function by taking the  first-order Taylor expansion to its right-hand side (RHS).
We thus obtain \eqref{exp_y}.

Furthermore, the RHS of \eqref{dc_cons1} is convex because it is  a convex quadratic-over-linear function under a composition with affine mapping \cite[Section 3.2]{boyd2004convex}. Similarly,
 \eqref{dc_cons1} can be convexified by taking the first-order Taylor expansion to the RHS, as shown in \eqref{43_cpms}.
 
We now complete the convexification of all  non-convex constraints in problem \eqref{recastpro3}. The transformed problem is given in \eqref{recastpro4}. Note that all the non-convex functions and their surrogate functions (derived via first-order Taylor expansion) have identical values and gradients at local point $\{\mA^t,\yy^t\}$.  Moreover,  the transformed problem satisfies the Slater's condition. According to \cite[Section III-B]{10100674}, the solution of \eqref{recastpro4} terminates  at a KKT point of \eqref{recastpro3} and qualifies as a local minimum if it resides in the interior of  \eqref{recastpro4}'s feasible set.

	\bibliographystyle{IEEEtran}
	\bibliography{IEEEabrv,mybib}
\end{document}